\theoremstyle{plain}
\newtheorem{thm}{Theorem}
\newtheorem{lem}[thm]{Lemma}
\theoremstyle{definition}
\newtheorem{dfn}{Definition}
\title{Chinese Restaurant Game - Part I: Theory of Learning with Negative Network Externality}
\author{
\IEEEauthorblockN{Chih-Yu Wang$^{1,2}$, Yan Chen$^1$, and K. J. Ray Liu$^1$}\\
\IEEEauthorblockA{
$^1$Department of Electrical and Computer Engineering, University of Maryland, College Park, MD 20742 USA\\
$^2$Graduate Institute of Communication Engineering, National Taiwan University, \\
Taipei, Taiwan
}
}
\begin{document}

\maketitle

\begin{abstract}
In a social network, agents are intelligent and have the capability to make decisions to maximize their utilities. They can either make wise decisions by taking advantages of other agents' experiences through learning, or make decisions earlier to avoid competitions from huge crowds.
Both these two effects, social learning and negative network externality, play important roles in the decision process of an agent. While there are existing works on either social learning or negative network externality, a general study on considering both these two contradictory effects is still limited. We find that the Chinese restaurant process, a popular random process, provides a well-defined structure to model the decision process of an agent under these two effects. By introducing the strategic behavior into the non-strategic Chinese restaurant process, in Part I of this two-part paper, we propose a new game, called Chinese Restaurant Game, to formulate the social learning problem with negative network externality.
Through analyzing the proposed Chinese restaurant game, we derive the optimal strategy of each agent and provide a recursive method to achieve the optimal strategy.
How social learning and negative network externality influence each other under various settings is also studied through simulations.
\end{abstract}

\newpage

\section{Introduction}
How agents in a network learn and make decisions is an important issue in numerous research fields, such as social learning in social networks, machine learning with communications among devices, and cognitive adaptation in cognitive radio networks. Agents make decisions in a network in order to achieve certain objectives. For example, one customer goes to the supermarket for an orange juice. He may need to choose one from dozens of brands. However, the agent's knowledge on the market may be very limited due to the limited ability in observations or the external uncertainty in the market, which means that the customer may not know the quality of all orange juice in different brands. This limitation reduces the accuracy of the agent's decision for his objective, e.g., to get the best orange juice of his taste.

The limited knowledge of one agent can be expanded through learning. One agent may learn from some information sources, such as the decisions of other agents, the advertisements from some brands, or his experience in previous purchases. All the information can help the agent to construct a belief, which is mostly probabilistic, on the unknown state. In most cases, the accuracy of the agent's decision can be greatly enhanced by taking into account the belief. A general learning and decision making process in a network can be described as follows. First, an agent collects information through available communication or observation methods and updates his belief on the uncertain states based on the collected information. Then, the agent estimates the expected rewards of certain actions according to the belief he constructed. Finally, the agent chooses the action that maximizes his reward.

Let us consider a social network in an uncertain system state. The state has an impact on the agents' rewards. When the impact is differential, i.e., one action results in a higher reward than other actions in one state but not in all states, the state information becomes critical for one agent to make the correct decision. In most social learning literatures, the state information is unknown to agents. Nevertheless, some signals related to the system state are revealed to the agents. These signals may be preserved in private or revealed to others. Then, the agents make their decisions sequentially, while their actions/signals may be fully or partially observed by other agents. Most of existing works \cite{bala1998learning,golub2007naive,acemoglu2011bayesian,acemoglu2010opinion} study how the believes of agents are formed through learning in the sequential decision process, and how accurate the believes will be when more information is revealed. One popular assumption in traditional social learning literatures is that there is no network externality, i.e., the actions of subsequent agents do not influent the reward of the former agents. In such a case, agents will make their decisions purely based on their own believes without considering the actions of subsequent agents. This assumption greatly limits the potential applications of these existing works.

The network externality, i.e., the influence of other agents' behaviors on one agent's reward, is a classic topic in economics. How the relations of agents influence an agent's behavior is one of the major problems in coordinate game theory \cite{cooper1999coordination}. When the network externality is positive, the problem can be modeled as a coordination game, where agents seek the best common decisions to cooperate with others. When the externality is negative, it becomes an anti-coordination game, where agents try to avoid making the same decisions with others \cite{katz1986technology,sandholm2005negative,fagiolo2005endogenous}.

In the literature, there are some works on combining the positive network externality with social learning, such as voting game \cite{wit1999social,battaglini2005sequential,ali2010observ} and investment game \cite{gale1995dynamic,dasgupta2000social,dasgupta2007coordination,choi2011network}. In the voting game, an election with several candidates is hold, where voters have their own preferences on the candidates. The preference of a voter on the candidates is constructed by the voter's belief on how the candidates can benefit him if winning the election. However, since the candidate can make efforts only when he wins the election, a voter's vote depends not only on his own preference but also on the probability that the candidate wins the election. In such a case, the estimation and prediction on the decisions of other voters become critical in the voting game. A learning process is involved when the voting game is sequential, i.e., voters vote the candidates sequentially and the vote of each voter is known by others. In the sequential voting game, voters learn from the previous votes to update their believes on the candidates and the probability that the candidates win the election.

In the investment game, there are multiple projects and investors, where each project has different probability of success and different payoff. One investor may invest one or several projects if his budget allows. If the project succeeds, he receives a payoff from the project. When more investors invest in the same project, the succeeding probability of the project increases, which benefits all investors investing this project. Note that in both voting and investment games, the agent's decision has a positive effect on ones' decisions. When one agent makes a decision, the subsequent agents are encouraged to make the same decision in two aspects: the probability that this action has the positive outcome increases due to this agent's decision, and the potential reward of this action may be significantly large according to the belief of this agent.

The combination of negative network externality with social learning, on the other hand, is difficult to analyze. When the network externality is negative, the game becomes an anti-coordination game, where one agent seeks the strategy that differs from others' to maximize his own reward. Nevertheless, in such a scenario, the agent's decision also contains some information about his belief on the uncertain system state, which can be learned by subsequent agents through social learning algorithms. Thus, subsequent agents may then realize that his choice is better than others, and make the same decision with the agent. Since the network externality is negative, the information leaked by the agent's decision may impair the reward the agent can obtain in the game. Therefore, rational agents should take into account the possible reactions of subsequent players to maximize their own rewards.

The negative network externality plays an important rule in many applications in different research fields, such as spectrum access in cognitive radio, storage service selection in cloud computing, and deal selection on Groupon in online social networking. In spectrum access problem, for instance, secondary users access the same spectrum need to share with each others. The more secondary users access the same channel, the less available access time for each of them. In storage service selection problem, the reliability and availability are affected by the number of subscribers. The more subscribers using the same service, the lower the service quality of the cloud storage platform. For the deal selection on Groupon website, some businesses may receive overwhelming number of customers under the discounted deal. The overwhelming number of customers has a negative network externality on the quality of the products. In these examples, the negative network externality degrades the utility of the agents making the same decision. Therefore, the agents should take into account the possibility of degraded utility, e.g., less access time, lower reliability, or lower service quality, when making the decisions.

%In summary, when we are dealing with resource sharing problem with sequential decision structure, the interaction between negative network externality and learning effect becomes the main issue we need to address.

The aforementioned social learning approaches are mostly strategic, where agents are considered as players with bounded or unbounded rationality in maximizing their own rewards. Machine learning, which is another class of approaches for the learning problem, focuses on designing algorithms for making use of the past experience to improve the performance of similar tasks in the future \cite{mitchell1997machine}. Generally there exists some training data and the devices follow a learning method designed by the system designer to learn and improve the performance of some specific tasks. Most learning approaches studied in machine learning are non-strategic without the rationality on considering their own benefit. Such non-strategic learning approaches may not be applicable to the scenario where devices are rational and intelligent enough to choose actions to maximize their own benefits instead of following the rule designed by the system designer.

Chinese restaurant process, which is introduced in non-parametric learning methods in machine learning \cite{aldous1985exchangeable}, provides an interesting non-strategic learning method for unbounded number of objects. In Chinese restaurant process, there exists infinite number of tables, where each table has infinite number of seats. There are infinite number of customers entering the restaurant sequentially. When one customer enters the restaurant, he can choose either to share the table with other customers or to open a new table, with the probability being predefined by the process. Generally, if a table is occupied by more customers, then a new customer is more likely to join the table, and the probability that a customer opens a new table can be controlled by a parameter \cite{pitman1995exchangeable}. This process provides a systematic method to construct the parameters for modeling unknown distributions.

By introducing the strategic behavior into the non-strategic Chinese restaurant process, we proposed a new game, called \textbf{Chinese Restaurant Game}, to formulate the social learning problem with negative network externality. Let us consider a Chinese restaurant with $K$ tables. There are $N$ customers sequentially requesting for seats from these $K$ tables for having their meals. One customer may request one of the tables in number. After requesting, he will be seating in the table he requested. We assume that all customers are rational, i.e., they prefer bigger space for a comfortable dining experience. Thus, one may be delighted if he has a bigger table. However, since all tables are available to all customers, he may need to share the table with others if multiple customers request for the same table. In such a case, the customer's dining space reduces, due to which the dining experience is impaired. Therefore, the key issue in the proposed Chinese restaurant game is how the customers choose the tables to enhance their own dining experience. This model involves the negative network externality since the customer's dining experience is impaired when others share the same table with him. Moreover, when the table size is unknown to the customers, but each of them receives some signals related to the table size, this game involves the learning process if customers can observe previous actions or signals. Such a theoretic Chinese restaurant game framework is very general and can be applied into many research areas, such as online social networks, wireless communication, and cloud computing, which will be discussed in Part II of this two-part paper \cite{wang2011crgpart2}.

In the rest of the paper, we first provide detailed descriptions on the system model of Chinese restaurant game in Section \ref{sec_sys}. Then, we analyze the simultaneous game model to show how customers behave given the perfect knowledge on the table size in Section \ref{sec_simul}. Next, we study the sequential game model with perfect information to illustrate the advantage of playing first in Section \ref{sec_seque}. In Section \ref{sec_gen}, we show the general Chinese restaurant game framework by analyzing the learning behaviors of customers under the negative network externality and uncertain system state. We provide a recursive method to construct the best response for customers, and discuss the simulation results in Section \ref{sec_sim}. Finally, we draw conclusions in Section \ref{sec_con}.

\section{Related Works}
A closely-related strategic game model to our work is the global game \cite{carlsson1993global,morris2001global}. In the global game, all agents, with limited knowledge on the system state and information hold by other agents, make decisions simultaneously. The agent's reward in the game is determined by the system state and the number of agents making the same decision with him. The influence may be positive or negative depending on the type of network externality. An important characteristics of global game is that the equilibrium is unique, which simplifies the discussion on the possible outcome of the game. It draws great attentions in various research fields, such as financial crisis \cite{angeletos2006crises}, sensor networks \cite{krishnamurthy2011decentra} and cognitive radio networks \cite{krishnamurthy2009decentra}. Since all players in the global game make decisions simultaneously, there is no learning involved in the global game.

In recent years, several works \cite{dasgupta2000social,angeletos2006signaling,angeletos2007dynamic,costain2007herding,dasgupta2007coordination} make efforts to introduce the learning and signaling into the global game. Dasgupta's first attempt was investigating a binary investment model, while one project will succeed only when enough number of agents invest in the project in \cite{dasgupta2000social}. Then, Dasgupta studied a two-period dynamic global game, where the agents have the options to delay their decisions in order to have better private information of the unknown state in \cite{dasgupta2007coordination}.

Angeletos \emph{et. al.} studied a specific dynamic global game called regime of changes \cite{angeletos2006signaling,angeletos2007dynamic}. In the regime of changes game, each agent may propose an attack to the status quo, i.e., the current politic state of the society. When the collected attacks are large enough, the status quo is abandoned and all attackers receive positive payoffs. If the status quo does not change, the attackers receive negative payoffs. Angeletos \emph{et. al.} first studied a signaling model with signals at the beginning of the game in \cite{angeletos2006signaling}. Then, they proposed a multiple stages dynamic game to study the learning behaviors of agents in the regime of change game in \cite{angeletos2007dynamic}.

Costain provided a more general dynamic global game with an unknown binary state and a general utility function in \cite{costain2007herding}. The utility function includes information revelation, strategic complementarities, and payoff heterogeneity. To simplify the analysis, the positions of the agents in the game are assumed to be unknown. Nevertheless, most of these works study the multiplicity of equilibria in dynamic global game with simplified models, such as binary state or binary investment model. Moreover, the network externality they considered in their models are mostly positive. By proposing the Chinese restaurant game, we hereby provides a more general game-theoretic framework on studying the social learning in a network with negative network externality, which has many applications in various research fields.

%In machine learning, a non-strategic but related problem is the multi-armed bandit problem \cite{berry1985bandit}. In the multi-armed bandit problem, a bandit with multiple arms is provided to a gambler. The gambler may have different levels of rewards by playing different arms each time. Thus, the gambler may try and learn in each play in order to maximize his collected rewards. Liu and Zhao extended this model by considering multiple agents and including the network externality in \cite{liu2010distributed}. They studied how agents learn the expected payoff and other agent's choice by estimating the regrets after choosing different arms based on his current belief. Nevertheless, the study in multi-armed bandit problem is generally non-strategic. They assume agents will always follow the learning rule designed by the system designer. The strategic study on the learning problem in the proposed Chinese restaurant game is necessary when the agents are considered fully-rational.

\section{System Model}\label{sec_sys}
Let us consider a Chinese restaurant with $K$ tables numbered $1,2,...,K$ and $N$ customers labeled with $1,2,...,N$. Each customer requests for one table for having a meal. There may be multiple customers request for the same table. Each table has infinite seats, but may be in different size. We model the table sizes of a restaurant with two components: the restaurant state $\theta$ and the table size functions $\{R_1(\theta),R_2(\theta),...,R_K(\theta)\}$. The state $\theta$ represents an objective parameter, which may be changed when the restaurant is remodeled. The table size function $R_j(\theta)$ is fixed, i.e., the functions $\{R_1(\theta),R_2(\theta),...,R_K(\theta)\}$ will be the same every time the restaurant is remodeled. An example of $\theta$ is the order of existing tables. Suppose that the restaurant has two tables, one is of size $L$ and the other is of size $S$. Then, the owner may choose to number the large one as table $1$, and the small one as table $2$. The decision on the numbering can be modeled as $\theta \in \{1,2\}$, while the table size functions $R_1(\theta)$ and $R_2(\theta)$ are given as $R_1(1)=L$, $R_1(2)=S$, and $R_2(1)=S$, $R_2(2)=L$. Let $\Theta$ be the set of all possible state of the restaurant. In this example, $\Theta=\{1,2\}$.

We formulate the table selection problem as a game, called \textbf{Chinese Restaurant Game}. We first denote $\mathbf{X} = \{1,...,K\}$ as the action set (tables) that a customer may choose, where $x_i \in A$ means that customer $i$ chooses the table $x_i$ for a seat. Then, the utility function of customer $i$ is given by $U(R_{x_i},n_{x_i})$, where $n_{x_i}$ is the number of customers choosing table $x_i$. According to our previous discussion, the utility function should be an increasing function of $R_{x_i}$, and a decreasing function of $n_{x_i}$. Note that the decreasing characteristic of $U(R_{x_i},n_{x_i})$ over $n_{x_i}$ can be regarded as the negative network externality effect since the degradation of the utility is due to the joining of other customers. Finally, let $\mathbf{n}=\{n_1,n_2,...,n_K\}$ be the numbers of customers on the $K$ tables, i.e., the grouping of customers in the restaurant.

As mentioned above, the restaurant is in a state $\theta \in \Theta$. However, customers may not know the exact state $\theta$, i.e., they may not know the exact size of each table before requesting. Instead, they may have received some advertisements or gathered some reviews about the restaurant. The information can be treated as some kinds of signals related to the true state of the restaurant. In such a case, they can estimate $\theta$ through the available information, i.e., the information they know and/or gather in the game process. Therefore, we assume that all customers know the prior distribution of the state information $\theta$, which is denoted as $\mathbf{g_0} =\{g_{0,l}|g_{0,l}=Pr(\theta=l),~\forall l \in \Theta\}$. The signal each customer received $s_i \in S$ is generated from a predefined distribution $f(s|\theta)$.

\subsection{Belief on State}
 In this subsection, we introduce the concept of belief to describe how the customers estimate the system state $\theta$. Since customers may make decisions sequentially, it is possible that the customers who make decisions later learn the signals from those customers who make decisions first. Let us denote the signals customer $i$ learned, excluding his own signal $s_i$, as $\mathbf{h_i}=\{s\}$. With the help of these signals $\mathbf{h_i}$, his own signal $s_i$, the prior distribution $\mathbf{g_0}$, and the conditional distribution $f(s|\theta)$, each customer $i$ can estimate the current system state in probability with the belief being defined as
\begin{equation}
    \mathbf{g_i} = \{g_{i,l}|g_{i,l}=Pr(\theta=l|\mathbf{h_i},s_i,\mathbf{g_0},f),~\forall l \in \Theta\} ~\forall i \in N .
\end{equation}

According to the above definition, $g_{i,l}$ represents the probability that system state $\theta$ is equal to $l$ conditioning on the collected signals $\mathbf{h_i}$, received signal $s_i$, the prior probability $\mathbf{g_0}$, and the conditional distribution $f(s|\theta)$.
Notice that in the social learning literature, the belief can be obtained through either non-Bayesian naive updating rule \cite{bala1998learning,golub2007naive} or fully rational Bayesian rule \cite{acemoglu2011bayesian}. For the non-Bayesian naive updating rule, it is implicitly based on the assumption that customers are only limited rational and follows some predefined rules to compute their believes. Their capability to maximize their utilities is limited not only by the game structure and learned information, but also by the non-Bayesian naive updating rules. In the fully rational Bayesian rule, customers are fully rational and have the potential to optimize their actions without the restriction on the fixed belief updating rule.
Since the customers we considered here are fully rational, they will follow the Bayesian rule to update their believes as follows:
\begin{equation}\label{eqn_g}
    g_{i,l}= \frac{g_{0,l}Pr(\mathbf{h_i},s_i|\theta=l)}{\sum_{l' \in \Theta}g_{0,l'}Pr(\mathbf{h_i},s_i|\theta=l')}.
\end{equation}
Notice that the exact expression for belief updating depends on how the signals are generated and learned, which is generally affected by the conditional distribution $f(s|\theta)$ and the game structure.

\section{Simultaneous Game with Perfect Signal: How Negative Network Externality Affects}\label{sec_simul}
The first game structure we would like to discuss is the simultaneous game, in which all customers make decisions simultaneously, e.g., all agents arrive the restaurant at the same time. In such a scenario, there is no learning involved in the game since customers request the tables at the same time. By investigating this game model, we can have an initial understanding on how customers behave in the game.

We start with a simple case where there are only two customers and two tables. In such a case, there are two possible system states, $\Theta=\{\theta_1, \theta_2\}$, indicating which table is larger. When the system state is $\theta_1$, $R_1(\theta_1)=L$ and $R_2(\theta_1)=S$ where $L \geq S$. On the other hand, if the system state is $\theta_2$, then $R_1(\theta_2)=L$ and $R_2(\theta_2)=S$. Moreover, in such a scenario, the signal is assumed to perfectly reveal the system state and indicate the exact amount of resource in each pool, e.g., $S=\{s_1,s_2\},~f(s_1|\theta=\theta_1)=1$, and $~f(s_2|\theta=\theta_2)=1$. Under such a signal structure, a customer can immediately know what the system state is when receiving the signal. The customer also knows that his opponent has the perfect signal of the true system state, i.e., the opponent also knows the exact size of each table. With such a simple setting, we temporally remove the uncertainty on the state to see how customers make decisions given the network externality.

Given the opponent's decision, which may be the larger table or smaller table, a rational customer should choose the table that can maximize his utility. If the decision made by the opponent is the smaller one, then the customer's best choice is the larger one since $U(L,1) > U(S,1)$. However, if the opponent chooses the larger table, the customer's choice depends on how severe the negative network externality is. If $U(L,2) > U(S,1)$, the customer will choose the same larger table. Otherwise, the smaller table will be chosen. The result shows that even both customers have already learned the true system state, i.e., which table is larger, they may not both choose the larger one in the equilibrium. Instead, the equilibrium depends on how severe the negative network externality is. If the externality results in an unacceptable penalty, then customers should choose different tables to avoid it.

\subsection{Best Response of Customers Under Perfect Signal}
Now let us consider the general scenario where there are $N$ customers, $K$ tables, and $L$ possible states $\Theta=\{\theta_1,...,\theta_L\}$. Here, we consider the perfect signals case, i.e., the system state $\theta$ and the sizes of tables $R_1(\theta),R_2(\theta),...,R_K(\theta)$ are known by all customers. The imperfect signal case will be discussed in Section \ref{sec_gen}. Since the customers are rational, their objectives in this game are to maximize their own utilities. However, since their utilities are determined by not only their own actions but also others', the customers' behaviors in the game are influenced by each other.

A strategy describes how a player will play given any possible situations in the game. In the simultaneous Chinese restaurant game, the customer's strategy should be a mapping from other customers' table selections to his own table selection. Recalling that $n_j$ stands for the number of customers choosing table $j$. Let us denote $\mathbf{n_{-i}}=\{n_{-i,1},n_{-i,2},...,n_{-i,K}\}$ with $n_{-i,j}$ being the number of customers except customer $i$ choosing table $j$. Then, given $\mathbf{n_{-i}}$, a rational customer $i$ should choose the action as
\begin{equation}\label{eqn3}
    BE_i(\mathbf{n_{-i}},\theta) = \arg \max_{x \in A} U(R_x(\theta),n_{-i,x}+1).
\end{equation}

The (\ref{eqn3}) describes a special set of strategy called best response, which represents the optimal action of a customer that maximizes the utility given other customers' actions. In the following, we give a formal definition of best response.
\begin{dfn} Considering a game with $N$ players, each with an action space $A_i$ and a utility function $u_i(x_i,x_{-i})$, where $x_i$ is player $i$'s action and $x_{-i}$ is the actions of all players except player $i$. The best response of customer $i$ is
\begin{equation}
    BE_i(x_{-i}) = \arg \max_{x \in A_i} u_i(x,x_{-i}).
\end{equation}
\end{dfn}

\subsection{Nash Equilibrium Under Perfect Signal}
Nash equilibrium is a popular concept for predicting the outcome of a game with rational customers. Informally speaking, Nash equilibrium is an action profile, where each customer's action is the best response to other customers' actions in the profile. Since all customers use their best responses, none of them have the incentive to deviate from their actions. A formal definition of Nash equilibrium for the simultaneous game is given as follows.

\begin{dfn}[Nash Equilibrum] %Considering a game with players $1,2,...,N$. Each player $i$ has an action space $A_i$ and a utility function $u_i(a_i,a_{-i})$, where $a_i$ is the player's action and $a_{-i}$ is the action profile of all players except player $i$.
Nash equilibrium is the action profile $\mathbf{x^*}=\{x_1^*,x_2^*,...,x_N^*\}$ where $\forall i \in N$, $BE_i(x^*_{-i})=x_i^*$.
\end{dfn}

According to the definition of Nash equilibrium, the sufficient and necessary condition of Nash equilibrium in the simultaneous Chinese restaurant game is stated in the following theorem.

\begin{thm}\label{thm_ne_sim_p} Given the customer set $\{1,...,N\}$, the table set $\{1,...,K\}$, and the current system state $\theta$, for any Nash equilibrium of the simultaneous Chinese restaurant game with perfect signal, its equilibrium grouping $\mathbf{n^*}$ should satisfy the following conditions
\begin{equation}\label{eq_ne_sim_p}
    U(R_x(\theta),n^*_x) \geq U(R_y(\theta),n^*_y+1),\ \ if\ n^*_x > 0, \forall x,y \in \{1,...,K\}.
\end{equation}
\end{thm}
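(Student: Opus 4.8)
The plan is to prove the stated (necessary) direction by translating the customer-level Nash equilibrium condition into the grouping-level inequalities \eqref{eq_ne_sim_p}. I would fix a Nash equilibrium action profile $\mathbf{x^*}=\{x_1^*,\ldots,x_N^*\}$ and let $\mathbf{n^*}$ be the grouping it induces, so that $n_x^*=|\{i:x_i^*=x\}|$ for every table $x$. The whole argument rests on the observation that the game is anonymous: a customer's utility depends only on the table he picks and the total number of customers sharing it, so it suffices to verify the inequality once per occupied table using any single representative occupant.

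First I would consider an arbitrary table $x$ with $n_x^*>0$ and pick any customer $i$ with $x_i^*=x$. Because $\mathbf{x^*}$ is a Nash equilibrium, customer $i$ plays his best response \eqref{eqn3}, which gives $U(R_x(\theta),n_{-i,x}^*+1)\geq U(R_y(\theta),n_{-i,y}^*+1)$ for every $y\in\{1,\ldots,K\}$. The crux is the bookkeeping of the occupancy counts as seen by $i$: since $i$ himself sits at $x$, we have $n_{-i,x}^*=n_x^*-1$, hence $n_{-i,x}^*+1=n_x^*$; and for any $y\neq x$ customer $i$ is absent from $y$, so $n_{-i,y}^*=n_y^*$ and therefore $n_{-i,y}^*+1=n_y^*+1$. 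Substituting these identities into the best-response inequality yields exactly $U(R_x(\theta),n_x^*)\geq U(R_y(\theta),n_y^*+1)$ for all $y\neq x$, which is \eqref{eq_ne_sim_p}.

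It remains to cover the diagonal case $y=x$. This does not come from the best-response comparison, which there is vacuous (staying put gives $n_{-i,x}^*+1=n_x^*$, so the inequality collapses to an equality). Instead I would invoke the negative network externality directly: since $U$ is decreasing in its second argument, $U(R_x(\theta),n_x^*)\geq U(R_x(\theta),n_x^*+1)$, which is \eqref{eq_ne_sim_p} with $y=x$. Finally, anonymity guarantees that the choice of representative $i$ on table $x$ is immaterial, so the inequality obtained pertains to the table itself and not merely to one customer; ranging over all occupied $x$ and all $y$ completes the argument.

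The main obstacle is not conceptual difficulty but the careful count bookkeeping $n_{-i,x}^*=n_x^*-1$ versus $n_{-i,y}^*=n_y^*$, since an off-by-one error here would misalign the two sides of the inequality; the secondary subtlety is recognizing that the $y=x$ case must be supplied by monotonicity of $U$ rather than by the equilibrium condition. For completeness I would note that the converse (sufficiency) claimed in the surrounding text follows by reading the same equivalences backward: if a grouping satisfies \eqref{eq_ne_sim_p}, then every occupant of every occupied table already meets his best-response condition, so any action profile realizing that grouping is a Nash equilibrium.
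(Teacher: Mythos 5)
Your proof is correct and takes essentially the same route as the paper's own necessity argument: fix a Nash equilibrium, take any occupant $i$ of an occupied table $x$, and translate his best-response condition into the grouping inequality via the bookkeeping $n_{-i,x}^*+1=n_x^*$ and $n_{-i,y}^*+1=n_y^*+1$ for $y\neq x$. Your explicit treatment of the diagonal case $y=x$ via monotonicity of $U$ (and your closing remark on sufficiency) are small refinements of points the paper states but glosses over, and do not change the substance.
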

\begin{proof}

\begin{itemize}
  \item Sufficient condition: suppose that the action profile of all players is $\mathbf{x}=\{x_1,...,x_N\}$ and such an action profile leads to the grouping $\mathbf{n^*}=\{n_1^*,...,n_K^*\}$ that satisfies (\ref{eq_ne_sim_p}). Without loss of generality, let us assume that customer $i$ chooses table $j$, i.e., $x_i=j$, then we have
      \begin{equation}
           u_i(x_i,x_{-i}) = U(R_j(\theta),n^*_j).
      \end{equation}
      If customer $i$ chooses any other table $k\neq j$, i.e., $x'_i=k\neq x_i=j$, then his utility becomes
      \begin{equation}
           u'_i(x'_i,x_{-i})  = U(R_k(\theta),n^*_k+1).
      \end{equation}
      Since $U(R_j(\theta),n^*_j)\geq U(R_k(\theta),n^*_k+1), \forall j, k$, we have $BE_i(x_{-i})=x_i, \forall i$. Therefore, $a=\{x_1,...,x_N\}$ is a Nash equilibrium.
  \item Necessary condition: suppose that the Nash equilibrium $\mathbf{x^*}=\{x_1^*,x_2^*,...,x_N^*\}$ leads to the grouping $\mathbf{n^*}=\{n_1^*,...,n_K^*\}$. Without loss of generality, let us assume that customer $i$ chooses table $j$, i.e., $x_i^*=j$, then we have
      \begin{equation}
           u_i(x_i^*,x^*_{-i}) = U(R_j(\theta),n^*_j) \text{ and } n^*_j > 0.
      \end{equation}
      If customer $i$ chooses any other table $k\neq j$, i.e., $x'_i=k\neq x_i^*=j$, then his utility becomes
      \begin{equation}
           u'_i(x'_i,x^*_{-i})  = U(R_k(\theta),n^*_k+1).
      \end{equation}
      Since $\mathbf{x^*}=\{x_1^*,x_2^*,...,x_N^*\}$ is a Nash equilibrium, we have $BE_i(x^*_{-i})=x_i^*, \forall i$, i.e.,
      \begin{equation}
    U(R_j(\theta),n^*_j) \geq U(R_k(\theta),n^*_k+1),\ \ if\ n^*_j > 0, \forall j,k \in \{1,...,K\}.
\end{equation}
\end{itemize}
\end{proof}

From Theorem \ref{thm_ne_sim_p}, we can see that, at Nash equilibrium, one customer's utility would never become higher by deviating to another table. Moreover, any deviation to another table will degrade the utility of all customers in that table due to the negative network externality. The (\ref{eq_ne_sim_p}) also implies that customers may eventually have different utilities even the tables they choose have the same size. A simple example would be a three-customer restaurant with two tables in exact same size. Since there are three customers, at the Nash equilibrium, one of the table must be chosen by two customers while the other table is occupied only by one customer.

\subsection{Uniqueness of Equilibrium Grouping}
Obviously, there will be more than one Nash equilibrium since we can always exchange the actions of any two customers in one Nash equilibrium to build a new Nash equilibrium without violating the sufficient and necessary condition shown in (\ref{eq_ne_sim_p}). Nevertheless, the equilibrium grouping $\mathbf{n^*}$ may be unique as stated in the following Theorem.

\begin{thm}\label{thm_exist_unique_ne_sim_p} There exists a Nash equilibrium in the simultaneous game with perfect signal. If the inequality in (\ref{eq_ne_sim_p}) strictly holds for all $x,y \in \{1,...,K\}$, then the equilibrium grouping $\mathbf{n}^*=\{n_1^*,...,n_K^*\}$ is unique.
\end{thm}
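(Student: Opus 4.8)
The plan is to treat existence and uniqueness separately, using the equilibrium characterization of Theorem~\ref{thm_ne_sim_p} as the main tool in both parts.

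For \textbf{existence}, I would give a constructive (greedy) seating and verify that it satisfies~(\ref{eq_ne_sim_p}). Assign the $N$ customers to tables one at a time; when a customer is seated, place him at a table $z$ maximizing $U(R_z(\theta),m_z+1)$, where $m_z$ is the number already seated at $z$ at that moment. Let $\mathbf{n}^*$ denote the resulting grouping. To verify the equilibrium condition, fix any table $x$ with $n^*_x>0$ and let $i$ be the \emph{last} customer assigned to $x$. When $i$ was seated, table $x$ held $n^*_x-1$ customers (no later customer joins $x$, by the choice of $i$), so his realized marginal utility is $U(R_x(\theta),n^*_x)$, and by the greedy rule $U(R_x(\theta),n^*_x)\geq U(R_y(\theta),m_y+1)$ for every table $y$, where $m_y$ is the occupancy of $y$ at that time. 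Since occupancies are nondecreasing over the assignment, $m_y\leq n^*_y$, and since $U$ is decreasing in its second argument, $U(R_y(\theta),m_y+1)\geq U(R_y(\theta),n^*_y+1)$. Chaining these yields exactly~(\ref{eq_ne_sim_p}), and the sufficiency direction of Theorem~\ref{thm_ne_sim_p} then guarantees this grouping is realized by a Nash equilibrium.

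For \textbf{uniqueness}, I would argue by contradiction. Suppose two distinct groupings $\mathbf{n}^*$ and $\mathbf{n}'$ both satisfy the strict form of~(\ref{eq_ne_sim_p}). Since both total $N$ customers yet differ, by a pigeonhole argument there must exist a table $a$ with $n^*_a>n'_a$ and a table $b$ with $n^*_b<n'_b$. Hence $n^*_a\geq n'_a+1$ with $n^*_a>0$, and $n'_b\geq n^*_b+1$ with $n'_b>0$, so the strict inequality applies at table $a$ under $\mathbf{n}^*$ and at table $b$ under $\mathbf{n}'$. Applying~(\ref{eq_ne_sim_p}) to $\mathbf{n}^*$ with $x=a,y=b$, and to $\mathbf{n}'$ with $x=b,y=a$, together with the monotonicity of $U$ in the occupancy, I would obtain $U(R_a(\theta),n^*_a)>U(R_b(\theta),n^*_b+1)\geq U(R_b(\theta),n'_b)$ and $U(R_b(\theta),n'_b)>U(R_a(\theta),n'_a+1)\geq U(R_a(\theta),n^*_a)$, which combine to give $U(R_a(\theta),n^*_a)>U(R_a(\theta),n^*_a)$, a contradiction. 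Therefore $\mathbf{n}^*=\mathbf{n}'$.

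The point that needs the most care is twofold. In the existence step, rigor hinges on the observation that occupancies only increase during the greedy assignment, so that the last customer's comparison against interim counts $m_y$ dominates the comparison against the final counts $n^*_y$; identifying the last customer at each occupied table is what pins down $n^*_x$ exactly. In the uniqueness step, the essential move is the pigeonhole fact that equal totals force discrepancies of opposite sign, after which strictness of~(\ref{eq_ne_sim_p}) is precisely what upgrades the chained inequalities into a strict contradiction. I would also note that, in the non-strict case, the very same chain collapses to a string of equalities rather than a contradiction, which is consistent with the possibility of distinct equilibrium groupings and explains why strictness is required in the hypothesis.
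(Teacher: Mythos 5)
Your proposal is correct and follows essentially the same route as the paper: a myopic (greedy) sequential seating whose ``last customer at each occupied table'' argument verifies (\ref{eq_ne_sim_p}) and invokes the sufficiency direction of Theorem~\ref{thm_ne_sim_p} for existence, followed by a pigeonhole-plus-chained-monotonicity contradiction for uniqueness. The only minor difference is that the paper assumes strictness only at one grouping and uses the weak equilibrium condition (from the necessary direction of Theorem~\ref{thm_ne_sim_p}) at the other, which your chain also accommodates, since a single strict link already yields the contradiction.
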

\begin{proof} Since the signals are prefect, the current system state $\theta$ is known by all customers. In the following, we first propose a greedy algorithm to construct a Nash equilibrium and then show the uniqueness of the equilibrium grouping when the inequality strictly holds. Since exchanging the actions of any two customers at the Nash equilibrium will lead to another Nash equilibrium, the Nash equilibrium is generally not unique. Without loss of generality, in the proposed greedy algorithm, the customers choose their actions sequentially with customer $i$ being the $i-th$ customer choosing the action. We let customers choose their actions in the myopic way, i.e., they choose the tables that can maximize their current utilities purely based on what they have observed. Let $\mathbf{n_i}=\{n_{i,1},n_{i,2},...,n_{i,K}\}$ with $\sum_{j=1}^{K} n_{i,j}=i-1$ be the grouping observed by customer $i$. Then, customer $i$ will choose the myopic action given by
\begin{equation}
    BE^{myopic}_i(\mathbf{n_i},\theta)=\arg \max_{x \in A} U(R_x(\theta),n_{i,x}+1).
\end{equation}

Let $\mathbf{x^*}=\{x_1^*,x_2^*,...,x_N^*\}$ be the output action set of the proposed greedy algorithm and $\mathbf{n^*}=\{n_1^*,n_2^*,...,n_K^*\}$ be the corresponding grouping. For any table $j$ with $n^*_j > 0$, suppose customer $k$ is the last customer choosing table $j$. Then, we have
\begin{equation}\label{eqn12}
    U(R_j(\theta),n_{k,j}+1) \geq U(R_{j'}(\theta),n_{k,j'}+1), \forall j' \in \{1,...,K\}.
\end{equation}

Since customer $k$ is the last customer choosing table $j$, we have $n^*_j=n_{k,j}+1$ and $n^*_{j'} \geq n_{k,j'}$. Then, according to (\ref{eqn12}), $\forall j' \in \{1,...,K\}$, we have
\begin{equation}\label{eqn13}
    U(R_j(\theta),n^*_{j})= U(R_j(\theta),n_{k,j}+1) \geq U(R_{j'}(\theta),n_{k,j'}+1) \geq U(R_{j'}(\theta),n^*_{j'}+1),
\end{equation}
where the last inequality comes from the fact that $U(\cdot)$ is a decreasing function in terms of $n$.

Note that (\ref{eqn13}) holds for all $j,j' \in \{1,...,K\}$ with $n^*_j > 0$, i.e., $U(R_j(\theta),n^*_{j}) \geq U(R_{j'}(\theta),n^*_{j'}+1),\ \forall j,j' \in \{1,...,K\}\ with\ n^*_j > 0$. According to Theorem \ref{thm_ne_sim_p} and (\ref{eqn13}), the output action set $\mathbf{x^*}=\{x_1^*,x_2^*,...,x_N^*\}$ from the proposed greedy algorithm is a Nash equilibrium.

Next, we would like to prove by contradiction that if the inequality in (\ref{eq_ne_sim_p}) strictly holds, the equilibrium grouping $\mathbf{n}^*=\{n_1^*,...,n_K^*\}$ is unique. Suppose that there exists another Nash equilibrium with equilibrium goruping $\mathbf{n'}=\{n'_1,...,n'_K\}$, where $n'_j \neq n^*_j$ for some $j \in \{1,...,K\}$. Since both $\mathbf{n}^*$ and $\mathbf{n'}$ are equilibrium groupings, we have $\sum_{j=1}^K n'_j = \sum_{j=1}^K n^*_j = N$. In such a case, there exists two table $x$ and $y$ with $n'_x > n^*_x$ and $n'_y < n^*_y$. Then, since $\mathbf{n^*}$ is an equilibrium grouping, we have
\begin{equation}\label{eqn15}
    U(R_y(\theta),n^*_y) > U(R_x(\theta),n^*_x+1).
\end{equation}

Since $n'_x > n^*_x$, $n'_y < n^*_y$, and $U(\cdot)$ is a deceasing function of $n$, we have
\begin{equation}\label{eqn17}
    U(R_x(\theta),n^*_x) > U(R_x(\theta),n^*_x+1) \geq U(R_x(\theta),n'_x),
\end{equation}
and
\begin{equation}\label{eqn18}
    U(R_y(\theta),n'_y) > U(R_y(\theta),n'_y+1) \geq U(R_y(\theta),n^*_y).
\end{equation}

Since $\mathbf{n'}$ is also an equilibrium grouping, we have
\begin{equation}\label{eqn19}
    U(R_x(\theta),n'_x) \geq U(R_y(\theta),n'_y+1).
\end{equation}

According to (\ref{eqn17}), (\ref{eqn18}), and (\ref{eqn19}) we have
\begin{equation}
    U(R_x(\theta),n^*_x+1) \geq U(R_x(\theta),n'_x) \geq U(R_y(\theta),n'_y+1) \geq U(R_y(\theta),n^*_y),
\end{equation}
which contradicts with (\ref{eqn15}). Therefore, the equilibrium grouping $\mathbf{n}^*$ is unique when the inequality in (\ref{eq_ne_sim_p}) strictly holds.
\end{proof}

\section{Sequential Game with Perfect Signal: The Advantage of Playing First}\label{sec_seque}
In the previous section, we have studied how negative network externality affects the action of each customer and found that a balance will finally be achieved among tables such that there will be no overwhelming requests for one table. However, we also find that some customers may have higher utilities at the Nash equilibrium. In this section, we extend the Chinese restaurant game into a sequential game model, where customers choose their actions in a pre-determined order. We assume that every customer can observe all the actions chosen before him, but cannot change the action once chosen.

%Let us consider the Chinese restaurant game with $K$ tables and $N$ customers. Since the signals are prefect, all customers know exactly what the current system state $\theta$ is.
In this sequential Chinese restaurant game, customers make decisions sequentially with a predetermined order known by all customers, e.g., waiting in a line of the queue outside of the restaurant. Without loss of generality, in the rest of this paper, we assume the order is the same as the customer's number, i.e., the order of customer $i$ is $i$. We assume every customer knows the decisions of the customers who make decisions before him, i.e., customer $i$ knows the decisions of customers $\{1,...,i-1\}$.  Let $\mathbf{n_i}=\{n_{i,1},n_{i,2},...,n_{i,K}\}$ be the current grouping, i.e., the number of customers choosing table $\{1,2,...,K\}$ before customer $i$. The $\mathbf{n_i}$ roughly represents how crowded each tables is when customer $i$ enters the restaurant. Notice that $\mathbf{n_i}$ may not be equal to $\mathbf{n}$, which is the final grouping that determines customers' utilities. A table with only few customers may eventually be chosen by many customers at the end of the game. %We may consider $\mathbf{n_i}$ as a reference of current popularity of tables.
The best response function of customer $i$ is
\begin{equation}
    BE^{se}_i(\mathbf{n_i},\theta)=\arg \max_x U(R_x(\theta),n_x(\mathbf{n_i})),
\end{equation}
where $n_x(\mathbf{n_i})$ denotes the expected number of customers choosing table $x$ given $\mathbf{n_i}$. The problem here is how to predict the decisions of the remaining customers given the current observation $\mathbf{n_i}$ and state $\theta$.

\subsection{Subgame Perfect Nash Equilibrium and Advantage of Playing First}
In this subsection, we will study the possible equilibria of the sequential Chinese restaurant game. In particular, we will study the subgame perfect Nash equilibrium. A subgame is a part of the original game. In our sequential Chinese restaurant game, any game process begins from player $i$, given all possible actions before player $i$, could be a subgame. A formal definition of subgame is given as follows.

\begin{dfn} A subgame in the sequential Chinese restaurant game is consisted of two elements: 1) It begins from customer $i$; 2) The current grouping before customer $i$ is $\mathbf{n_i}=\{n_{i,1},...,n_{i,K}\}$ with $\sum_{j=1}^{K} n_{i,j} = i-1$.
\end{dfn}

With the definition of subgame, a subgame perfect Nash equilibrium is defined as follows.

\begin{dfn} A Nash equilibrium is a subgame perfect Nash equilibrium if and only if it is a Nash equilibrium for any subgame.
\end{dfn}

%Informally speaking, a Nash equilibrium can be a subgame perfect Nash equilibrium if and only if, given any possible subgame, the best response strategy in the original game is still the best response strategy in the subgame.
With the concept of subgame perfect Nash equilibrium, we can refine the number of Nash equilibria in the original game. We would like to show the subgame perfect Nash equilibrium in the sequential Chinese restaurant game by constructing one. Given a subgame, the corresponding equilibrium grouping and the best responses of agents in the subgame can be derived through two functions as follows. Let $EG(\mathbf{X_s},N_s)$ be the function that generates the equilibrium grouping for a table set $\mathbf{X_s}$ and number of customers $N_s$. The equilibrium grouping is given by (\ref{eq_ne_sim_p}) with $\mathbf{X}$ being replaced by $\mathbf{X_s}$ and $N$ being replaced by $N_s$. Notice that $\mathbf{X_s}$ could be any subset of the total table set $\{1,...,K\}$, and $N_s$ is less or equal to $N$. We will prove that the output of $EG(\cdot)$ is the corresponding equilibrium grouping in the subgame in Lemma \ref{lem_eg_subgame}. Then, let $PC(\mathbf{X_s},\mathbf{n_s},N_s)$, where $\mathbf{n_s}$ denotes the current grouping observed by the customers, be the algorithm that generates the set of available tables given the current grouping $\mathbf{n_s}$ in the subgame. The algorithm removes the tables which have been already over-requested, i.e., the tables that already occupied by more than the expected number of customers in the equilibrium grouping. The procedures of implementing $PC(\mathbf{X_s},\mathbf{n_s},N_s)$ are described as follows:
\begin{enumerate}
    \item Initialize: $\mathbf{X_o} = \mathbf{X_s}$, $N_t = N_s$
    \item $\mathbf{X_t} = \mathbf{X_o}$, $\mathbf{n^e}=EG(\mathbf{X_t},N_t)$, $\mathbf{X_o} = \{x|x \in \mathbf{X_t}, n^e_j \geq n_{s,j}\}$, $N_t = N_s - \sum_{x \in \mathbf{X_s} \setminus \mathbf{X_o}} n_{s,x}$.
    \item If $\mathbf{X_o} \neq \mathbf{X_t}$, go back to step 2.
    \item Output $\mathbf{X_o}$.
\end{enumerate}
As shown in the following Lemma, the $PC(\mathbf{X_s},\mathbf{n_s},N_s)$ will never remove the tables that are best choices of the customers.

\begin{lem}\label{lem_pg_remove} Given a subgame with current grouping $\mathbf{n_s}$, current available table set $\mathbf{X_s}$, and the number of players $N_s$, if table $j \not\in \mathbf{X'_s}=PC(\mathbf{X_s},\mathbf{n_s},N_s)$, then there exists at least one table $j' \in \mathbf{X'_s}$ such that $U(R_{j'}(\theta),n'_{j'}) \geq U(R_{j}(\theta),n_{s,j})$.%, where $\mathbf{n^*}$ is the grouping in the Nash equilibrium of this subgame.
\end{lem}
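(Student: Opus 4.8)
The plan is to track how the equilibrium utilities of the surviving tables evolve across the successive pruning passes of $PC(\cdot)$, and to show that once table $j$ is discarded, \emph{every} table that is still occupied in each subsequent pass carries a utility no smaller than $U(R_j(\theta),n_{s,j})$. The witness $j'$ is then any table that remains occupied in the final equilibrium grouping $\mathbf{n'}=EG(\mathbf{X'_s},N'_s)$. I will index the passes of the loop by $t$, writing $\mathbf{X^{(t)}}$, $N^{(t)}$, and $\mathbf{n^{(t)}}=EG(\mathbf{X^{(t)}},N^{(t)})$ for the table set, the customer count $N_t$, and the equilibrium grouping at pass $t$, so that $\mathbf{X^{(t+1)}}\subseteq\mathbf{X^{(t)}}$ and the output is $\mathbf{X'_s}=\mathbf{X^{(\mathrm{final})}}$ with $\mathbf{n'}=\mathbf{n^{(\mathrm{final})}}$.

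For the base case I would look at the pass $t$ in which $j$ is first removed. Table $j$ is discarded precisely because $n^{(t)}_j<n_{s,j}$, i.e. $n^{(t)}_j+1\le n_{s,j}$, so monotonicity of $U$ in its second argument gives $U(R_j(\theta),n^{(t)}_j+1)\ge U(R_j(\theta),n_{s,j})$. Applying the equilibrium characterization (\ref{eq_ne_sim_p}) of Theorem \ref{thm_ne_sim_p} with any occupied table $y$ (so $n^{(t)}_y>0$) in the role of $x$ and with $j$ in the role of $y$ yields $U(R_y(\theta),n^{(t)}_y)\ge U(R_j(\theta),n^{(t)}_j+1)$. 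Chaining these two inequalities shows that every occupied table in $\mathbf{n^{(t)}}$ already satisfies the desired bound.

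The heart of the argument is a monotonicity claim for a single pruning pass: for every surviving table $x\in\mathbf{X^{(t+1)}}$ one has $n^{(t+1)}_x\le n^{(t)}_x$, and hence $U(R_x(\theta),n^{(t+1)}_x)\ge U(R_x(\theta),n^{(t)}_x)$. I would establish this in two steps. First, restricting the equilibrium inequalities (\ref{eq_ne_sim_p}) of $\mathbf{n^{(t)}}$ to the subset $\mathbf{X^{(t+1)}}$ shows that $\{n^{(t)}_x\}_{x\in\mathbf{X^{(t+1)}}}$ is itself an equilibrium grouping for the smaller table set with $M:=\sum_{x\in\mathbf{X^{(t+1)}}}n^{(t)}_x$ customers, so by the uniqueness part of Theorem \ref{thm_exist_unique_ne_sim_p} it equals $EG(\mathbf{X^{(t+1)}},M)$. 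Second, a bookkeeping computation on the count update gives $N^{(t+1)}=N^{(t)}-\sum_{x\in\mathbf{D}}n_{s,x}$ while $M=N^{(t)}-\sum_{x\in\mathbf{D}}n^{(t)}_x$, where $\mathbf{D}=\mathbf{X^{(t)}}\setminus\mathbf{X^{(t+1)}}$; since each removed table contributes $n_{s,x}>n^{(t)}_x$, this forces $N^{(t+1)}<M$. It then remains to compare two equilibria on the \emph{same} table set $\mathbf{X^{(t+1)}}$ with $N^{(t+1)}<M$ customers: running the greedy construction from the proof of Theorem \ref{thm_exist_unique_ne_sim_p}, the $M$-customer equilibrium is obtained from the $N^{(t+1)}$-customer one by admitting $M-N^{(t+1)}$ further customers, each of whom increments exactly one table, so $EG(\mathbf{X^{(t+1)}},N^{(t+1)})_x\le EG(\mathbf{X^{(t+1)}},M)_x=n^{(t)}_x$.

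With these pieces an induction over the passes finishes the proof: at pass $t+1$ any occupied table $z$ has $n^{(t+1)}_z>0$, hence $n^{(t)}_z\ge n^{(t+1)}_z>0$ by the monotonicity claim, so $z$ was occupied at pass $t$ and inherits the bound $U(R_z(\theta),n^{(t)}_z)\ge U(R_j(\theta),n_{s,j})$ from the induction hypothesis; combining this with $U(R_z(\theta),n^{(t+1)}_z)\ge U(R_z(\theta),n^{(t)}_z)$ propagates the bound to pass $t+1$. At termination $\mathbf{X^{(\mathrm{final})}}=\mathbf{X'_s}$ and $\mathbf{n^{(\mathrm{final})}}=\mathbf{n'}$, so any occupied table $j'$ serves as the claimed witness. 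I expect the main obstacle to be the monotonicity claim of the third paragraph, in particular the careful accounting of the count $N_t$ and the justification, via the greedy algorithm together with uniqueness, that decreasing the customer count on a fixed table set can only weakly decrease each table's equilibrium occupancy; once that is in hand the surrounding induction is routine.
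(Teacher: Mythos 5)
Your proof is correct, but it follows a genuinely different route from the paper's. The paper argues in one shot: it sets $\mathbf{n^*}=EG(\mathbf{X_s},N_s)$, notes that removal of $j$ forces $n_{s,j}>n^*_j$, observes that any realized equilibrium grouping $\mathbf{n'}$ of the subgame must have $n'_j\geq n_{s,j}>n^*_j$ (seated customers cannot leave), and then uses the counting identity $\sum_k n'_k=\sum_k n^*_k=N_s$ to extract a table $j'$ with $n'_{j'}<n^*_{j'}$; chaining $U(R_{j'}(\theta),n'_{j'})>U(R_{j'}(\theta),n^*_{j'})\geq U(R_j(\theta),n^*_j+1)\geq U(R_j(\theta),n_{s,j})$ via (\ref{eq_ne_sim_p}) and monotonicity finishes the argument. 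You instead run an induction over the pruning passes of $PC(\cdot)$, anchored by a monotonicity lemma for $EG$ (restriction of an equilibrium grouping to a surviving subset is still an equilibrium grouping, uniqueness pins it down, and reducing the customer count on a fixed table set weakly reduces every occupancy, via the greedy construction of Theorem \ref{thm_exist_unique_ne_sim_p}). Your version is longer but actually tighter on two points where the paper is loose: the paper's removal condition $n_{s,j}>n^*_j$ is literally valid only when $j$ is discarded on the \emph{first} pass (later passes compare against recomputed, smaller groupings, and your own monotonicity claim shows $n^{(t)}_j\leq n^*_j$, so the first-pass inequality does not follow automatically), and the paper never verifies that its witness $j'$ survives into $\mathbf{X'_s}$, which the lemma statement requires and which your final-pass induction delivers for free. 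Two small caveats you should make explicit: your appeals to uniqueness of $EG$ need the strict-inequality hypothesis of Theorem \ref{thm_exist_unique_ne_sim_p} (the paper implicitly assumes this by treating $EG$ as a function), and you should note that the terminal grouping has at least one occupied table, which holds because at least one unseated customer remains, i.e., $N_s\geq\sum_x n_{s,x}+1$, so $N^{(\mathrm{final})}\geq 1$.
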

\begin{proof}%According to the decreasing characteristic of $U(R,n)$ with respect to $n$, we have
%\begin{equation}
%    U(R_j(\theta),n^*_j) > U(R_j(\theta),n_{s,j}).
%\end{equation}

Let $\mathbf{n^*} = EG(N_s,X_s)$. Since table $j$ is removed by $PC(\mathbf{X_s},\mathbf{n_s},N_s)$, the inequality $n_{s,j} > n^*_j$ should be hold. However, since $n_{s,j} > n^*_j$, the equilibrium grouping $\mathbf{n^*}$ is impossible to be reached in the Nash equilibrium of this subgame. Assuming the Nash equilibrium of this subgame is $\mathbf{n'}$, we have $n'_j \geq n_{s,j} > n^*_j$, which means $\sum_{k \in X_s, k \neq j} n'_k < \sum_{k \in X_s, k \neq j} n^*_k$. Therefore, there exists a $j' \in \mathbf{X_s} \setminus \{j\}$ such that $n'_{j'} < n^*_{j'}$.

Since $\mathbf{n^*}$ is an equilibrium grouping, we have
\begin{equation}\label{eqn23}
    U(R_{j'}(\theta),n^*_{j'}) \geq U(R_{j}(\theta),n^*_{j}+1).
\end{equation}

According to the above discussions, we have
\begin{equation}\label{eqn24}
U(R_{j'}(\theta),n'_{j'}) > U(R_{j'}(\theta),n^*_{j'}) \geq  U(R_{j}(\theta),n^*_{j}+1) \geq U(R_j(\theta),n_{s,j}) \geq U(R_j(\theta),n'_j)
    %U(R_j(\theta),n'_j) \leq U(R_j(\theta),n_{s,j}) \leq U(R_{j}(\theta),n^*_{j}+1) \leq U(R_{j'}(\theta),n^*_{j'}) < U(R_{j'}(\theta),n'_{j'}),
\end{equation}
where the the first inequality is due to $n'_{j'} < n^*_{j'}$, the second inequality is due to (\ref{eqn23}), and the last two inequalities are due to $n'_j \geq n_{s,j} > n^*_j$.

 According to (\ref{eqn24}), there exists at least one table $j'$ that can give the customer a higher utility than table $j$ in this subgame. Therefore, table $j$ is never the best response of the customer.
\end{proof}

Now, we propose a method to construct a subgame perfect Nash equilibrium. This equilibrium will satisfy the equilibrium grouping in (\ref{eq_ne_sim_p}). For each customer $i$, his strategy is described as follows:
\begin{equation}\label{eq_be_seq}
    BE^{se*}_i(\mathbf{n_i},\theta)=\arg \max_{x \in \mathbf{X^{i,cand}},n_{i,x} < n^{i,cand}_x} U(R_x(\theta),n^{i,cand}_x),
\end{equation}
where
\begin{eqnarray}
    \mathbf{X^{i,cand}} = PC(\mathbf{X},\mathbf{n_i},N), \\
    N^{i,cand} = N - \sum_{x \in X \setminus X^{i,cand}} n_{i,x}, \\
    \mathbf{n^{i,cand}} = EG(\mathbf{X^{i,cand}},N^{i,cand}).
\end{eqnarray}

In Lemma \ref{lem_eg_subgame}, we show that the above strategy results in the equilibrium grouping in any subgame.
\begin{lem}\label{lem_eg_subgame} Given the available table set $X_s=PC(\mathbf{X},\mathbf{n_s},N)$, $N_s = N - \sum_{x \in X \setminus X_s} n_{s,x}$, the proposed strategy shown in (\ref{eq_be_seq}) leads to an equilibrium grouping $\mathbf{n^*_s} = EG(\mathbf{X_s},N_s)$.% satisfying:
%\begin{equation}
%   U(R_y(\theta),n^s_y) \geq U(R_z(\theta),n^s_z+1), \ if\ n^s_y \neq 0, \forall y,z \in \mathbf{X_s}.
%\end{equation}
\end{lem}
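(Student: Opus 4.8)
The plan is to prove the lemma by induction on the number of customers who have yet to act in the subgame, $m = N_s - \sum_{x \in X_s} n_{s,x}$, reducing the whole argument to a capacity-counting statement. The guiding observation is that, by the termination condition of $PC$, every surviving table $x \in X_s$ satisfies $n_{s,x} \le n^*_{s,x}$, where $\mathbf{n^*_s} = EG(X_s,N_s)$; the quantities $n^*_{s,x}$ will play the role of per-table capacities. Using $\sum_{x \in X_s} n^*_{s,x} = N_s$ together with $N_s = N - \sum_{x \notin X_s} n_{s,x}$ gives $m = \sum_{x \in X_s}(n^*_{s,x} - n_{s,x})$, i.e. the number of remaining customers equals the total remaining capacity. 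Hence, if I can show that each remaining customer places himself on some table $x \in X_s$ with current occupancy strictly below $n^*_{s,x}$, never overshooting a capacity, then a pigeonhole argument forces every capacity to be met with equality at the end, so the final grouping on $X_s$ is exactly $\mathbf{n^*_s}$. The base case $m=0$ is immediate: then $\sum_{x \in X_s} n_{s,x} = N_s = \sum_{x \in X_s} n^*_{s,x}$ combined with $n_{s,x} \le n^*_{s,x}$ forces $n_{s,x} = n^*_{s,x}$ for every $x$.

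For the inductive step I would examine the move of the first customer $i$ in the subgame. Since his observed grouping is $\mathbf{n_s}$, by definition $\mathbf{X^{i,cand}} = PC(\mathbf{X},\mathbf{n_s},N) = X_s$, $N^{i,cand} = N_s$, and $\mathbf{n^{i,cand}} = EG(X_s,N_s) = \mathbf{n^*_s}$, so strategy (\ref{eq_be_seq}) makes him choose some $x^*$ with $n_{s,x^*} < n^*_{s,x^*}$, which is well defined because $m \ge 1$ guarantees at least one under-capacity table. After this move the count of $x^*$ rises by one but stays at most $n^*_{s,x^*}$, no capacity is exceeded, and the number of remaining customers drops to $m-1$. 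It then remains to verify that the residual subgame, started from the updated grouping, again satisfies the hypotheses of the lemma with the same capacities, so that the induction hypothesis applies and the play from customer $i+1$ onward completes the grouping to $\mathbf{n^*_s}$.

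Establishing that invariance is where the real work lies, and it is the step I expect to be the main obstacle. Two features of $PC$ have to be controlled. First, because the grouping only ever increases componentwise, $PC$ is monotone and the candidate set can only shrink, so every later candidate set is contained in $X_s$; together with Lemma \ref{lem_pg_remove}, which guarantees that removed tables are never best responses, this ensures that no later customer selects a table outside $X_s$ and that the bookkeeping of $N^{cand}$ only subtracts mass sitting on removed tables. Second, and more delicately, $PC$ recomputes an equilibrium grouping on the shrinking table set, so I must show these recomputations never reassign customers in a way that departs from $\mathbf{n^*_s}$. The natural tool is the uniqueness of the equilibrium grouping from Theorem \ref{thm_exist_unique_ne_sim_p}: I would show that the restriction of $\mathbf{n^*_s}$ to any surviving subset $X' \subseteq X_s$ still satisfies the equilibrium inequalities (\ref{eq_ne_sim_p}) on $X'$ — which is immediate, since those inequalities for $\mathbf{n^*_s}$ hold for all pairs in $X_s$ and in particular for all pairs in $X'$ — and therefore, by uniqueness, $EG(X',N')$ must coincide with that restriction provided $N'$ is the matching customer count. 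The crux is precisely the bookkeeping that makes $N'$ match: one must check that whenever a table is dropped from the candidate set its frozen occupancy is exactly its target $n^*_{s,x}$, so that the subtracted mass equals $\sum_{x \text{ dropped}} n^*_{s,x}$ and the recomputed equilibrium inherits the correct per-table targets. Once this consistency is in hand, the capacities $n^*_{s,x}$ are seen to be invariant across the subgame, the counting argument of the first paragraph closes the induction, and the final grouping equals $\mathbf{n^*_s} = EG(X_s,N_s)$.
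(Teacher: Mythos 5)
Your proposal is correct and is essentially the paper's own argument in contrapositive form: the paper supposes the final grouping differs from $\mathbf{n^*_s}=EG(\mathbf{X_s},N_s)$, takes the first customer $i$ who would push some table $j$ past $n^*_{s,j}$, observes that at that moment all occupancies still satisfy $n_{i,x}\le n^*_{s,x}$ so that $PC$ still returns $(\mathbf{X_s},N_s)$ and hence $\mathbf{n^{i,cand}}=\mathbf{n^*_s}$, and concludes that (\ref{eq_be_seq}) forbids exactly that choice---which is precisely your invariant that no capacity is ever overshot, after which equality follows by the same counting you invoke. The ``crux'' you isolate (invariance of the candidate set and targets after each legal move) is exactly the step the paper disposes of in one sentence (``according to the definition of $PC(\cdot)$, none of the tables will be removed''), so you are if anything more explicit than the paper on the only delicate point; note only that the clean resolution is that no table of $\mathbf{X_s}$ is ever dropped during play, since $PC$ removes only strictly over-occupied tables and the invariant keeps every occupancy at or below its target, rather than your stated bookkeeping condition that dropped tables be frozen exactly at their targets.
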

\begin{proof}
We prove this by contradiction. Let $\mathbf{n}=\{n_j| j \in X_s\}$ be the final grouping after all customers choose their tables according to (\ref{eq_be_seq}). Suppose that $\mathbf{n} \neq \mathbf{n^*_s}=EG(X_s,N_s)$, then there exists some tables $j$ that $n_j > n^*_{s,j}$. Let table $j$ be the first table that exceeds $n_{s,j}$ in this sequential subgame. Since $n_j > n^*_{s,j}$, there are at least $n^*_{s,j}+1$ customers choosing table $j$. Suppose the $n^*_{s,j}+1$-th customer choosing table $j$ is customer $i$. Let $\mathbf{n_i}=\{n_{i,1},n_{i,2},...,n_{i,K}\}$ be the current grouping observed by customer $i$ before he chooses the table. Since customer $i$ is the $n^*_{s,j}+1$-th customer choosing table $j$, we have $n_{i,j}=n^*_{s,j}.$ Since table $j$ is the first table exceeding $\mathbf{n^*_s}$ after customer $i$'s choice, we have
\begin{equation}
    n_{i,x} \leq n^*_{s,x}~\forall x \in X_s.
\end{equation}

According to the definition of $PC(\cdot)$, none of the tables will be removed from candidates. Thus, $X^{i,cand}=X_s$ and $N^{i,cand}=N_s$. We have
\begin{equation}
    \mathbf{n^{i,cand}} = EG(\mathbf{X^{i,cand}},N^{i,cand}) = EG(X_s,N_s) = \mathbf{n^*_s}.
\end{equation}

However, according to (\ref{eq_be_seq}), the customer $i$ should not choose table $j$ since $n_{i,j}=n^*_{s,j}=n^{i,cand}_j$. This contradicts with our assumption that customer $i$ is the $n^*_{s,j}+1$-th customer choosing table $j$. Thus, the strategy (\ref{eq_be_seq}) should lead to the equilibrium grouping $\mathbf{n^*_s}=EG(X_s,N_s)$.
\end{proof}

Note that Lemma \ref{lem_eg_subgame} also shows that the final grouping of the sequential game should be $\mathbf{n^*}=EG(X,N)$ if all customers follow the proposed strategy in (\ref{eq_be_seq}). With Lemma \ref{lem_eg_subgame}, we show the existence of subgame perfect Nash equilibrium with the following Theorem.

\begin{thm}\label{thm_exist_ne_seq_p} Given customer set $\{1,...,N\}$, table set $\mathbf{X}=\{1,...,K\}$, and the current state $\theta$, there always exists a subgame perfect Nash equilibrium with the corresponding equilibrium grouping $\mathbf{n^*}=\{n_1^*,...,n_K^*\}$ satisfying the conditions in (\ref{eq_ne_sim_p}).
%\begin{equation}\label{eq_ne_seq_p}
%   U(R_y(\theta),n^*_y) \geq U(R_z(\theta),n^*_z+1), \ if\ n^*_y \neq 0, \forall y,z \in \{1,...,K\}.
%\end{equation}
\end{thm}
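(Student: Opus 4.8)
The plan is to prove existence by verifying that the explicitly constructed strategy profile in (\ref{eq_be_seq}) is itself a subgame perfect Nash equilibrium; its induced grouping is then $EG(\mathbf{X},N)$ by the remark following Lemma \ref{lem_eg_subgame}, which satisfies (\ref{eq_ne_sim_p}). Since a subgame perfect Nash equilibrium is precisely a profile that induces a Nash equilibrium in every subgame, it suffices to fix an arbitrary subgame beginning at customer $i$ with observed grouping $\mathbf{n_i}$, assume that every customer $j \geq i$ plays according to (\ref{eq_be_seq}), and show that customer $i$ has no profitable deviation. Because $N$ is finite the game has a finite horizon, so the one-shot deviation principle guarantees that ruling out single-customer deviations in every subgame is enough.

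First I would fix the reference payoff. By Lemma \ref{lem_eg_subgame}, when every customer from $i$ onward follows (\ref{eq_be_seq}), the continuation reaches the equilibrium grouping $\mathbf{n^{i,cand}}=EG(\mathbf{X^{i,cand}},N^{i,cand})$ of the subgame, so customer $i$'s realized utility equals $U(R_{x}(\theta),n^{i,cand}_{x})$, where $x=BE^{se*}_i(\mathbf{n_i},\theta)$ is the maximizer in (\ref{eq_be_seq}). Next I would analyze a deviation to an arbitrary table $y\neq x$. After customer $i$ plays $y$, the remaining customers face a new subgame whose initial grouping is $\mathbf{n_i}$ with one extra customer on $y$; applying Lemma \ref{lem_eg_subgame} again, their play reaches that subgame's equilibrium grouping and customer $i$ ends with utility $U(R_y(\theta),n'_y)$ for some final count $n'_y\geq n_{i,y}+1$. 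The goal is then to show $U(R_y(\theta),n'_y)\leq U(R_{x}(\theta),n^{i,cand}_{x})$ in every case.

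The verification splits into three cases according to the status of $y$. If $y\in\mathbf{X^{i,cand}}$ still has room, i.e. $n_{i,y}<n^{i,cand}_y$, then $PC(\cdot)$ prunes no candidate table after the deviation, so the continuation again reaches $\mathbf{n^{i,cand}}$ and customer $i$ obtains $U(R_y(\theta),n^{i,cand}_y)$, which is no larger than $U(R_{x}(\theta),n^{i,cand}_{x})$ because $x$ is the maximizer in (\ref{eq_be_seq}) over exactly the tables with room. If $y\in\mathbf{X^{i,cand}}$ is already full, i.e. $n_{i,y}=n^{i,cand}_y$, the deviation forces $n'_y=n^{i,cand}_y+1$, and the equilibrium inequality (\ref{eq_ne_sim_p}) applied to $\mathbf{n^{i,cand}}$ (valid since $n^{i,cand}_x>0$) gives $U(R_{x}(\theta),n^{i,cand}_{x})\geq U(R_y(\theta),n^{i,cand}_y+1)$. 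Finally, if $y\notin\mathbf{X^{i,cand}}$ then $y$ was pruned by $PC(\cdot)$; subsequent customers avoid it so $n'_y=n_{i,y}+1$, and Lemma \ref{lem_pg_remove} supplies a surviving table $j'$ with $U(R_{j'}(\theta),n^{i,cand}_{j'})\geq U(R_y(\theta),n_{i,y})\geq U(R_y(\theta),n'_y)$, while $U(R_{x}(\theta),n^{i,cand}_{x})\geq U(R_{j'}(\theta),n^{i,cand}_{j'})$ by optimality of $x$.

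Since each case yields $U(R_y(\theta),n'_y)\leq U(R_{x}(\theta),n^{i,cand}_{x})$, customer $i$ cannot gain by deviating in any subgame, so the profile in (\ref{eq_be_seq}) induces a Nash equilibrium in every subgame, i.e. it is a subgame perfect Nash equilibrium, and its grouping is $EG(\mathbf{X},N)$. I expect the main obstacle to be exactly this deviation analysis, specifically controlling how customer $i$'s off-path choice propagates through the best responses of all later customers; the negative network externality is what makes it tractable, since moving to a fuller table can only lower customer $i$'s own payoff directly, and the equilibrium conditions of $\mathbf{n^{i,cand}}$ together with the pruning guarantee of Lemma \ref{lem_pg_remove} bound the off-path payoff. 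A secondary technical point worth checking is that $EG(\cdot)$ is single-valued in the first case, which relies on the strict-inequality hypothesis of Theorem \ref{thm_exist_unique_ne_sim_p}.
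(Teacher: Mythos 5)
Your proposal is correct and takes essentially the same route as the paper: both verify that the explicitly constructed strategy (\ref{eq_be_seq}) is a subgame perfect Nash equilibrium, using Lemma \ref{lem_eg_subgame} to pin down the on-path grouping as $EG(\cdot)$, the equilibrium-grouping inequality (\ref{eq_ne_sim_p}) together with monotonicity of $U$ in $n$ for deviations to full tables, and Lemma \ref{lem_pg_remove} for deviations to pruned tables. The only difference is organizational: you run the case analysis uniformly at the subgame level via the one-shot deviation principle, whereas the paper first establishes the whole-game Nash equilibrium (splitting into the last-customer, $n_{i,j'}<n^*_{j'}$, and $n_{i,j'}=n^*_{j'}$ cases) and then extends the argument to subgames.
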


\begin{proof}
We would like to show that the proposed strategy in (\ref{eq_be_seq}) forms a Nash equilibrium. Suppose customer $i$ chooses table $j$ in his round according to (\ref{eq_be_seq}). Then, customer $i$'s utility is $u_i=U(R_j(\theta),n^*_j)$ since based on Lemma \ref{lem_eg_subgame}, the equilibrium grouping $\mathbf{n^*}$ will be reached at the end of the game.

If customer $i$ is the last customer, i.e, $i=N$, and chooses another table $j' \neq j$ in his round, then his utility becomes $U(R_{j'}(\theta),n^*_{j'}+1)$. However, according to (\ref{eq_ne_sim_p}), we have
\begin{equation}
    u^*_j = U(R_{j}(\theta),n^*_j) \geq U(R_{j'}(\theta),n^*_{j'}+1) .
\end{equation}
Thus, choosing table $j$ is never worse than choosing table $j'$ for customer $N$.

If that customer $i$ is not the last customer, and he chooses table $j'$ instead of table $j$ in his round. Since all customers before customer $i$ follows (\ref{eq_be_seq}), we have $n_{i,j} \leq n^*_{j}~\forall j \in \mathbf{X}$. Otherwise, $\mathbf{n^*}$ cannot be reached, which contradicts with Lemma \ref{lem_eg_subgame}.

If $n_{i,j'} < n^*_{j'}$, we have $n_{i+1,j'} \leq n^*_{j'}$. In addition, we have $n_{i+1,j} = n_{i,j} \leq n^*_{j}~\forall j \in \mathbf{X} \setminus \{j'\}$, since other tables are not chosen by customer $i$. Thus, $\mathbf{X^{i+1,cand}}=PC(X,\mathbf{n_{i+1}},N=X)$ and $N^{i,cand}=N$. According to Lemma \ref{lem_eg_subgame}, the final grouping should be $\mathbf{n^*}=EG(X,N)$. Thus, the new utility of customer $i$ becomes $u'_i = U(R_{j'}(\theta),n^*_{j'})$. However, according to (\ref{eq_be_seq}), we have
\begin{equation}
    u_i=U(R_j(\theta),n^*_j) =\arg \max_{x \in \mathbf{X},n_{i,x} < n^*_x} U(R_x(\theta),n^*_x) \geq  U(R_{j'}(\theta),n^*_{j'}) = u'_i.
\end{equation}

Thus, choosing table $j'$ never gives customer $i$ a higher utility.

If $n_{i,j'} = n^*_{j'}$, and the final grouping is $\mathbf{n'}=\{n'_1,n'_2,...,n'_K\}$. Since customer $i$ chooses table $j'$ when $n_{i,j'} = n^*_{j'}$, we have $n'_{j'} \geq n_{i+1,j'} = n_{i,j'}+1 = n^*_{j'}+1$. Thus, we have
\begin{equation}
    u_i = U(R_{j}(\theta),n^*_j) \geq U(R_{j'}(\theta),n^*_{j'}+1) \geq U(R_{j'}(\theta),n'_{j'}) = u'_i,~\forall j' \in X,
\end{equation}
where the first inequality comes from the equilibrium grouping condition in (\ref{eq_ne_sim_p}), and the second inequality comes from the fact that $U(R,n)$ is decreasing over $n$ and $n'_{j'} \geq n^*_{j'}+1$. Thus, under both cases, choosing table $j'$ is never better than choosing table $j$. We conclude that $\{BE^{se*}_i(\cdot)\}$ in (\ref{eq_be_seq}) forms a Nash equilibrium, where the grouping being the equilibrium grouping $\mathbf{n^*}$.

Finally, we show that the proposed strategy forms a Nash equilibrium in every subgame. In Lemma \ref{lem_pg_remove}, we show that if the table $j$ is removed by $PC(X,\mathbf{n_s},N)$, it is never the best response of all remaining customers. Thus, we only need to consider the remaining table candidates $X_s = PC(X,\mathbf{n_s},N)$ in the subgame. Then, with Lemma \ref{lem_eg_subgame}, we show that for every possible subgame with corresponding $\mathbf{X_s}$, the equilibrium grouping $\mathbf{n^*_s}=EG(X_s,N_s)$ will be achieved at the end of the subgame. Moreover, the above proof shows that if the equilibrium grouping $\mathbf{n_s}$ will be achieved at the end of the subgame, $BE^{se*}_i(\cdot)$ is the best response function. Therefore, the proposed strategies indeed form a Nash equilibrium in every subgame, i.e., we have a subgame perfect Nash equilibrium.
\end{proof}

In the proof of the subgame perfect Nash equilibrium, we observe that sequential game structure brings advantages for those customers making decisions at the beginning of the game. According to (\ref{eq_be_seq}), customers who make decisions early can choose the table that provides the largest utility in the equilibrium. When the number of customers choosing that table reaches equilibrium number, the second best table will be chosen by subsequent customers until it is full again. For the last customer, he has no choice but to choose the worst one.

\section{Imperfect Signal Model: How Learning Evolves}\label{sec_gen}
In Section \ref{sec_seque}, we have showed that in the sequential game with perfect signal, customers choosing first have the advantages for getting better tables and thus higher utilities. However, such a conclusion may not be true when the signals are not perfect. When there are uncertainties on the table sizes, customers who arrive first may not choose the right tables, due to which their utilities may be lower. Instead, customers who arrive later may eventually have better chances to get the better tables since they can collect more information to make the right decisions. In other words, when signals are not perfect, learning will occur and may result in higher utilities for customers choosing later. Therefore, there is a trade-off between more choices when playing first and more accurate signals when playing later. In this section, we would like to study this trade-off by discussing the imperfect signal model.

In the imperfect signal model, we assume that the system state $\theta \in \Theta=\{1,2,...,L\}$ is unknown to all $N$ customers. The sizes of $K$ tables can be expressed as functions of $\theta$, which are denoted as $R_1(\theta),R_2(\theta),...,R_K(\theta)$. The prior probability of $\theta$, $\mathbf{g_0}=\{g_{0,1},g_{0,2},...,g_{0,K}\}$ with $g_{0,l}=Pr(\theta=l)$, is assumed to be known by all customers. Moreover, each customer receives a private signal $s_i \in S$, which follows a p.d.f $f(s|\theta)$. Here, we assume $f(s|\theta)$ is public information to all customers. When conditioning on the system state $\theta$, the signals received by the customers are uncorrelated. %In addition, $f(s|\theta=1) \neq f(s|\theta=0)$ for some $s \in S$.

In this sequential Chinese restaurant game with imperfect signal model, the customers make decisions sequentially with the decision orders being their numbers. After a customer $i$ made his decision, he cannot change his mind in any subsequent time and his decision and signal are revealed to all other customers. We assume customers are fully rational, which means that they follow the Bayesian learning method to learn the true state and choose their strategies to maximize their own utilities.

Since signals are revealed sequentially, the customers who make decisions later can collect more information for better estimations of the system state. When a new signal is revealed, all customers follow the Bayesian rule to update their believes based on their current believes. Derived from (\ref{eqn_g}), we have the following Bayesian belief updating function
\begin{equation}\label{eqn33}
    g_{i,l} = \frac{g_{i-1,l}f(s_i|\theta=l)}{\sum_{w \in \Theta} {g_{i-1,w}f(s_i|\theta=w)}}.
\end{equation}
Based on the updating rule in (\ref{eqn33}), customer $i$ can update his belief when a new signal is revealed.

\subsection{Best Response of Customers}
Since the customers are rational, they will choose the action to maximize their own expected utility conditioning on the information they collect. Let $\mathbf{n_i} = \{n_{i,1},n_{i,2},...,n_{i,K}\}$ be the current grouping observed by customer $i$ before he chooses the table, where $n_{i,j}$ is the number of customers choosing table $j$ before customer $i$. Then, let $\mathbf{h_i}=\{s_1,s_2,...,s_{i-1}\}$ be the history of revealed signals before customer $i$. In such a case, the best response of customer $i$ can be written as
\begin{equation}\label{eqn34}
    x_i = BE_i(\mathbf{n_i},\mathbf{h_i},s_i) = \arg\max_j E[u_i(R_j(\theta),n_j)|\mathbf{n_i}, \mathbf{h_i}, s_i].
\end{equation}

From (\ref{eqn34}), we can see that when estimating the expected utility in the best response function, there are two key terms needed to be estimated by the customer: the system state $\theta$ and the final grouping $\mathbf{n}=\{n_1,n_2,...,n_K\}$. The system state $\theta$ is estimated using the concept of belief denoted as $\mathbf{g_i}=\{g_{i,1},g_{i,2},...,g_{i,L}\}$ with $g_{i,l}=Pr(\theta=l|\mathbf{h_i},s_i)$. Since the information on the system state $\theta$ in $\mathbf{n_i}$ is fully revealed by $\mathbf{h_i}$, given $\mathbf{h_i}$, $\mathbf{g_i}$ is independent with $\mathbf{n_i}$. Therefore, given the customer's belief $\mathbf{g_i}$, the expected utility of customer $i$ choosing table $j$ becomes
\begin{equation}\label{eqn35}
    E[u_i(R_j(\theta),n_j)|\mathbf{n_i},\mathbf{h_i},s_i,x_i=j]=\sum_{w \in \Theta} {g_{i,w} E[u_i(R_j(w),n_j)|\mathbf{n_i},\mathbf{h_i},s_i,x_i=j,\theta=w]}.
\end{equation}

Note that the decisions of customers $i+1,...,N$ are unknown to customer $i$ when customer $i$ makes the decision. Therefore, a close-form solution to (\ref{eqn35}) is generally impossible and impractical. In this paper, we purpose a recursive approach to compute the expected utility.

\subsection{Recursive Form of Best Response}\label{sec_gen_recur}
Let $BE_{i+1}(\mathbf{n_{i+1}},h_{i+1},s_{i+1})$ be the best response function of customer $i+1$. Then, according to $BE_{i+1}(\mathbf{n_{i+1}},\mathbf{\mathbf{h_{i+1}}},s_{i+1})$, the signal space $S$ can be partitioned into $S_{i+1,1},...,S_{i+1,K}$ subspaces with
\begin{equation}\label{eqn36}
    S_{i+1,j}(\mathbf{n_{i+1}},\mathbf{h_{i+1}}) = \{s | s \in S, BE_{i+1}(\mathbf{n_{i+1}},\mathbf{h_{i+1}},s) = j\},~~ \forall j\in\{1,...,K\}.
\end{equation}

Based on (\ref{eqn36}), we can see that, given $\mathbf{n_{i+1}}$ and $\mathbf{h_{i+1}}$, $BE_{i+1}(\mathbf{n_{i+1}},\mathbf{h_{i+1}},s_{i+1})=j$ if and only if $s_{i+1} \in S_{i+1,j}$. Therefore, the decision of customer $i+1$ can be predicted according to the signal distribution $f(s|\theta)$ given by
\begin{equation}\label{eqn37}
    Pr(x_{i+1}=j|\mathbf{n_{i+1}},\mathbf{h_{i+1}})= \int_{s \in S_{i+1,j}(\mathbf{n_{i+1}},\mathbf{h_{i+1}})} {f(s)ds}.
\end{equation}

Let us define $m_{i,j}$ as the number of customers choosing table $j$ after customer $i$ (including customer $i$ himself). Then, we have $n_j=n_{i,j}+m_{i,j}$, where $n_j$ denotes the final number of customers choosing table $j$ at the end of the game. Moreover, according to the definition of $m_{i,j}$, we have
 \begin{equation}\label{eqn38}
    m_{i,j}=\left\{
              \begin{array}{ll}
                1+m_{i+1,j}, & \hbox{$x_i=j$;} \\
                m_{i+1,j}, & \hbox{else.}
              \end{array}
            \right.
\end{equation}

The recursive relation of $m_{i,j}$ in (\ref{eqn38}) will be used in the following to get the recursive form of the best response function. We first derive the recursive form of the distribution of $m_{i,j}$, i.e., $Pr(m_{i,j}=X|\mathbf{n_{i}},\mathbf{h_{i}},s_{i},x_{i})$ can be expressed as a function of $Pr(m_{i+1,j}=X|\mathbf{n_{i+1}},\mathbf{h_{i+1}},s_{i+1},x_{i+1}=j,\theta=l),~\forall~l \in \Theta,~0 \leq j \leq K$, as follows:
\begin{eqnarray}\label{eqn39}
\!\!\!\!\!\!\!\!\!\!\!\!&&\!\!\!\!\!\!\!\!\!\!\!\!Pr(m_{i,j}=X|\mathbf{n_i},\mathbf{h_i},s_i,x_i,\theta=l) =\left\{
                                            \begin{array}{ll}
                                              Pr(m_{i+1,j}=X-1|\mathbf{n_i},\mathbf{h_i},s_i,x_i,\theta=l), & \hbox{$x_i=j$,} \\
                                              Pr(m_{i+1,j}=X|\mathbf{n_i},\mathbf{h_i},s_i,x_i,\theta=l), & \hbox{$x_i \neq j$,}
                                            \end{array}
                                          \right. \\
                                          \!\!\!\!\!\!\!\!\!\!\!\!&&\!\!\!\!\!\!\!\!\!\!\!\!=\!\!\!\left\{\!\!\!\!
                                               \begin{array}{ll}
                                                 \sum_{u \in \{1,...,K\}} \int_{s \in S_{i+1,u}(\mathbf{n_{i+1}},\mathbf{h_{i+1}})} \nonumber {Pr(m_{i+1,j}=X-1|\mathbf{n_{i+1}},\mathbf{h_{i+1}},s_{i+1}=s,x_{i+1}=u, \theta=l)} f(s|\theta=l) ds, &\!\!\! \hbox{$x_i=j$,} \\
                                                 \sum_{u \in \{1,...,K\}} \int_{s \in S_{i+1,u}(\mathbf{n_{i+1}},\mathbf{h_{i+1}})} {Pr(m_{i+1,j}=X|\mathbf{n_{i+1}},\mathbf{h_{i+1}},s_{i+1}=s,x_{i+1}=u, \theta=l)} f(s|\theta=l) ds, &\!\!\! \hbox{$x_i \neq j$,}
                                               \end{array}
                                             \right.
\end{eqnarray}
where $\mathbf{h_{i+1}}$ and $\mathbf{n_{i+1}}$ can be obtained using
\begin{equation}\label{eq_h_plus}
\mathbf{h_{i+1}}=\{h_{i},s_i\} \text{ and } \mathbf{n_{i+1}}=\{n_{i+1,1},...,n_{i+1,K}\},
\end{equation}
with
\begin{equation}\label{eq_n_plus}
n_{i+1,k}=\left\{
        \begin{array}{ll}
                 n_{i,k}+1, & \mbox{if $x_i=k$}, \\
                 n_{i,k}, & \mbox{otherwise}. \\
        \end{array}
        \right.
\end{equation}

Based on (\ref{eqn39}), $Pr(m_{i,j}=X|\mathbf{n_{i}},\mathbf{h_{i}},s_{i},x_{i},\theta=l)$ can be recursively calculated. Therefore, we can calculate the expected utility $E[u_i(R_j(\theta),n_j)|\mathbf{n_i}, \mathbf{h_i}, s_i]$ as
\begin{equation}\label{eq_exp_util_recur}
    E[u_i(R_j(\theta),n_j)|\mathbf{n_i}, \mathbf{h_i}, s_i] = \sum_{l \in \Theta} \sum_{x=0}^{N-i+1} g_{i,l} Pr(m_{i,j}=x|\mathbf{n_i},\mathbf{h_i},s_i,x_i=j,\theta=l)u_i(R_j(l),n_{i,j}+x).
\end{equation}
Finally, the best response function of customer $i$ can be derived by
\begin{equation}\label{eqn44}
    BE_i(\mathbf{n_i},\mathbf{h_i},s_i) = \arg\max_j \sum_{l \in \Theta} \sum_{x=0}^{N-i+1} g_{i,l}Pr(m_{i,j}=x|\mathbf{n_i},\mathbf{h_i},s_i,x_i=j,\theta=l)u_i(R_j(l),n_{i,j}+x).
\end{equation}

With the recursive form, the best response function of all customers can be obtained using backward induction. The best response function of the last customer $N$ can be found as
\begin{equation}\label{eqn45}
    BE_N(\mathbf{n_N},h_N,s_N) = \arg\max_j \sum_{l \in \Theta} g_{N,l} u_N(R_j(l),n_{N,j}+1).
\end{equation}
Note that $Pr(m_{N,j}=X|\mathbf{n_{N}},\mathbf{h_{N}},s_{N},x_{N},\theta)$ can be easily derived as follows:
\begin{equation}\label{eqn46}
    Pr(m_{N,j}=1|\mathbf{n_{N}},\mathbf{h_{N}},s_{N},x_{N},\theta)= \left\{
        \begin{array}{ll}
                 1, & \mbox{if $x_{N} = j$}, \\
                 0, & \mbox{otherwise}. \\
        \end{array}
        \right.
\end{equation}

\section{Simulation}\label{sec_sim}
In this section, we verify the proposed recursive best response and corresponding equilibrium. We simulate a Chinese restaurant with two tables $\{1,2\}$ and two possible states $\theta \in \{1,2\}$. When $\theta=1$, the size of table $1$ is $R_1(1)=100$ and the size of table $2$ is $R_2(1)=40$. When $\theta=2$, $R_1(2)=40$ and $R_2(2)=100$. The state is uniformly randomly chosen at the beginning of each simulation with probability $0.5$. The number of customers is fixed. Each customer receives a randomly generated signal $s_i$ at the beginning of the simulation. While conditioning on the system state, the signals customers received are independent. The signal distribution $f(s|\theta)$ is given by
\begin{equation}
    Pr(s=1|\theta=1) = Pr(s=2|\theta=2) = p~,~Pr(s=2|\theta=1) = Pr(s=1|\theta=2) = 1-p,
\end{equation}
where $p \geq 0.5$ can be regarded as the quality of signals. When the signal quality $p$ is closer to $1$, the signal is more likely to reflect the true state $\theta$. With the signals, customers make their decisions sequentially. After the $i$-th customer makes his choice, he reveals his decision and signal to other customers. The game ends after the last customer made his decision. Then, the utility of the customer $i$ choosing table $j$ is given by $U_i(R_j(\theta),n_j)=\frac{R_j}{n_j}$, where $n_j$ is the number of customers choosing table $j$ at the end of the game.

%\subsection{Verification}
%We first validate the theoretical recursive best response function by checking whether the expected utility of certain actions shown in (\ref{eq_exp_util_recur}) fits the simulation results. We simulate with the case that the number of customers is $N=5$. We assume that $p \in [0.5,1]$. For each $p$, the game is simulated with $100,000$ times. The average utility of the simulations under the same signal quality are shown in Fig. \ref{fig_sim_ver}. The expected utility, which is computed by (\ref{eq_exp_util_recur}), is also shown in the same figure. We can see that the simulated average utilities match with the expected ones, which validates the correctness of (\ref{eq_exp_util_recur}).
% means that (\ref{eq_exp_util_recur}) correctly estimates the expected utility of a customer after choosing certain actions.

%\begin{figure}
%   \begin{centering}
%       \includegraphics[width=9cm]{sim_5_40_ver.eps}
%       \caption{Verification: Expected Utility vs. Simulated Average Utility}
%   \label{fig_sim_ver}
%   \end{centering}
%\end{figure}

\subsection{Optimality of Proposed Best Response}
We first verify the optimality of the best response. We study the $5$-customer scheme with the same settings in the previous simulation. We assume that all customers except customer $2$ apply their best response strategies while customer $2$ chooses the table opposite to his best response strategy with a miss probability $p^{mis}$. We assume the signal quality $p \in [0.5,1]$ and $p^{mis} \in [0,1]$ in the simulations.

\begin{figure}
    \begin{centering}
        \subfigure[Customer 2]{
        \includegraphics[width=7cm]{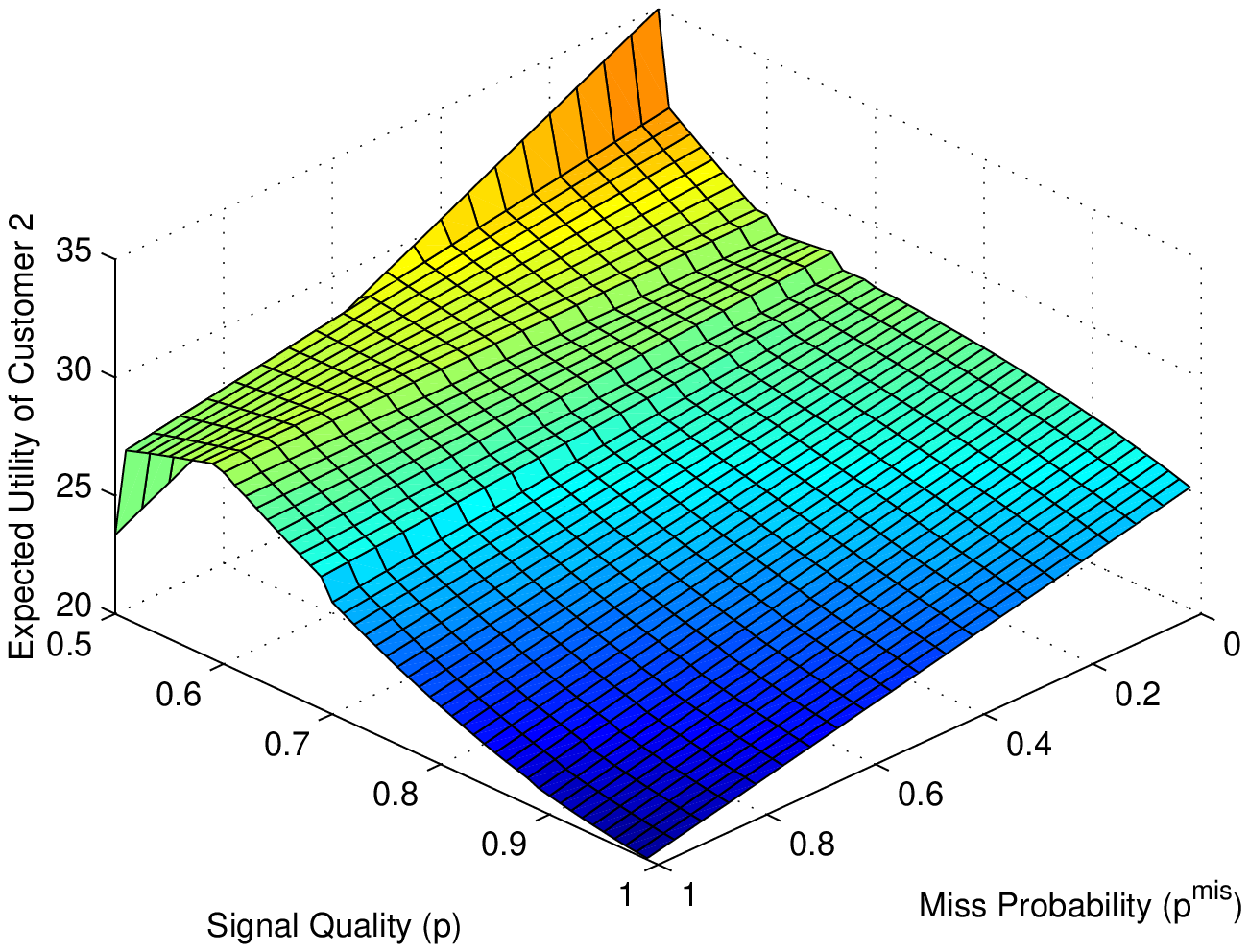}
        \label{fig_sim_check_2}
      }
      \subfigure[Expected Utility Increase of Customer 3]{
            \includegraphics[width=7cm]{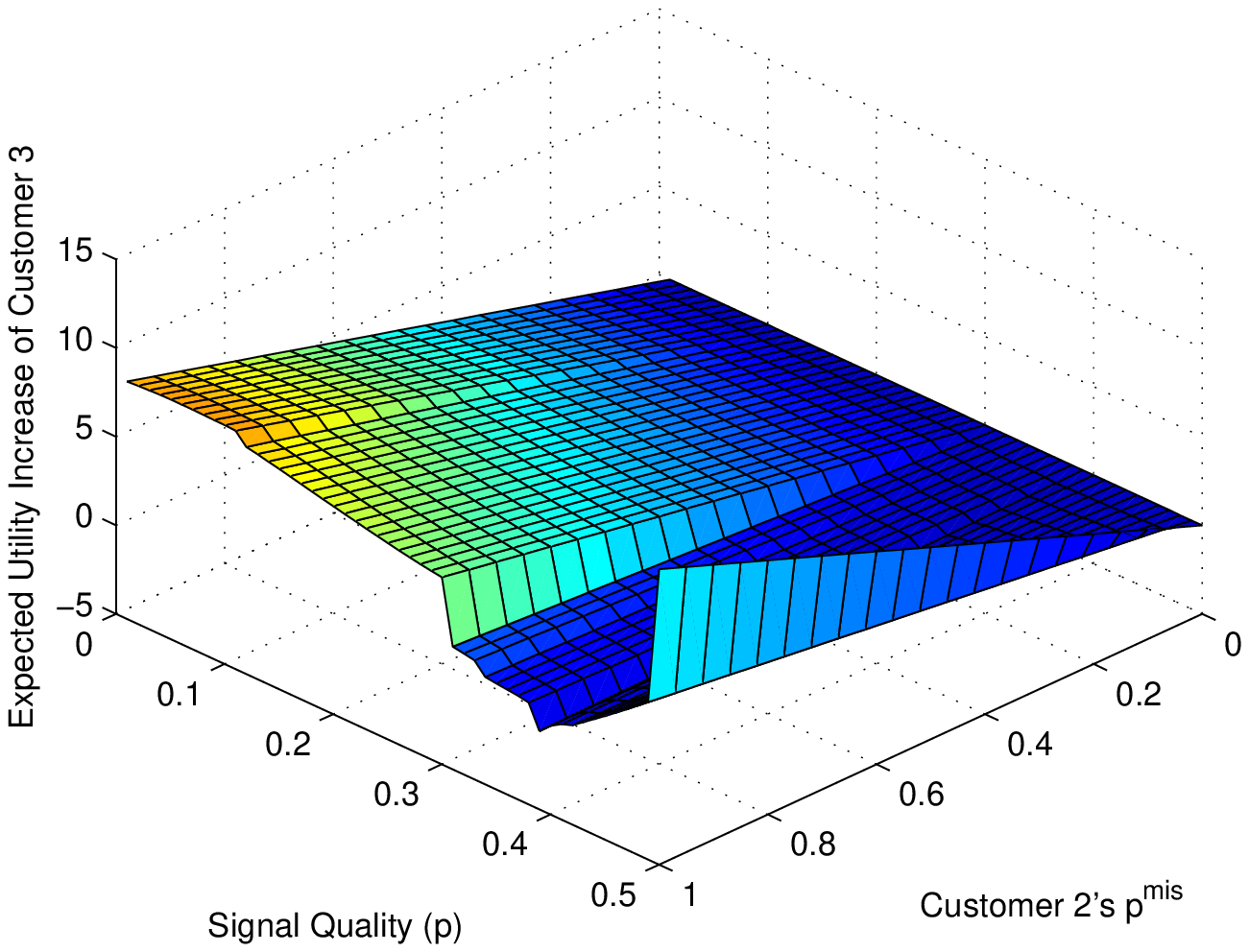}
        \label{fig_sim_check_3}
      }
    \caption{Expected utility of Customers under different Miss Probability of Customer $2$}
    \label{fig_sim_check}
    \end{centering}

\end{figure}

From Fig. \ref{fig_sim_check_2}, we can see that when $p^{mis}$ increases, the expected utility of customer $2$ always decreases for any signal quality $p$. This confirms the optimality of the proposed best response function. We also observe that when customer $2$ has a positive $p^{mis}$, at least one of the other customers will have a better average utility. Using the expected utility increase of customer $3$ shown in Fig. \ref{fig_sim_check_3} as an example, when customer $2$ makes a mistake in choosing the table, customer $3$ benefits from the mistake of customer $2$ under most signal qualities.

\subsection{Advantage of Playing Positions vs. Signal Quality}
Next, we investigate how the decision order and quality of signals affect the utility of customers. We follow the same settings in previous simulations except the table sizes. We fix the size of one table as $100$. The size of the other table is $r\times100$, where $r$ is the ratio of the table sizes. In the simulations, we assume the ratio $r \in [0,1]$. When the ratio $r=1$, two tables are identical, but the utility of choosing each table may have different utility since we may have odd customers. When $r=0$, one table has a size of $0$, which means a customer has a positive utility only when he chooses the correct table.

\begin{figure}
    \begin{centering}

      \subfigure[\scriptsize{Customer with Largest Expected Utility}]{

            \includegraphics[width=5cm]{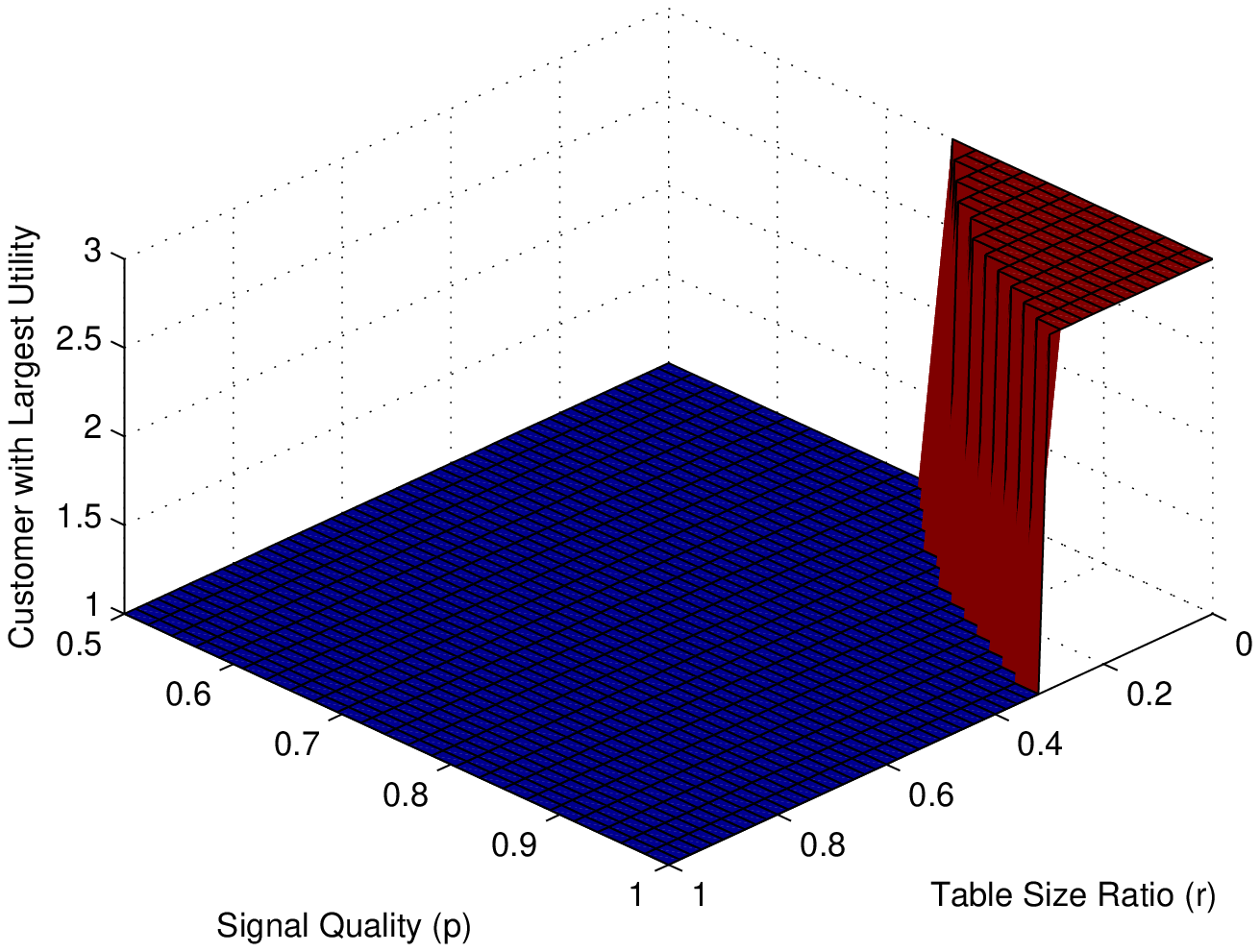}
        \label{fig_sim_adv_3_max}
      }
      \subfigure[\scriptsize{Utility Difference between $1$ and $2$}]{
        \includegraphics[width=5cm]{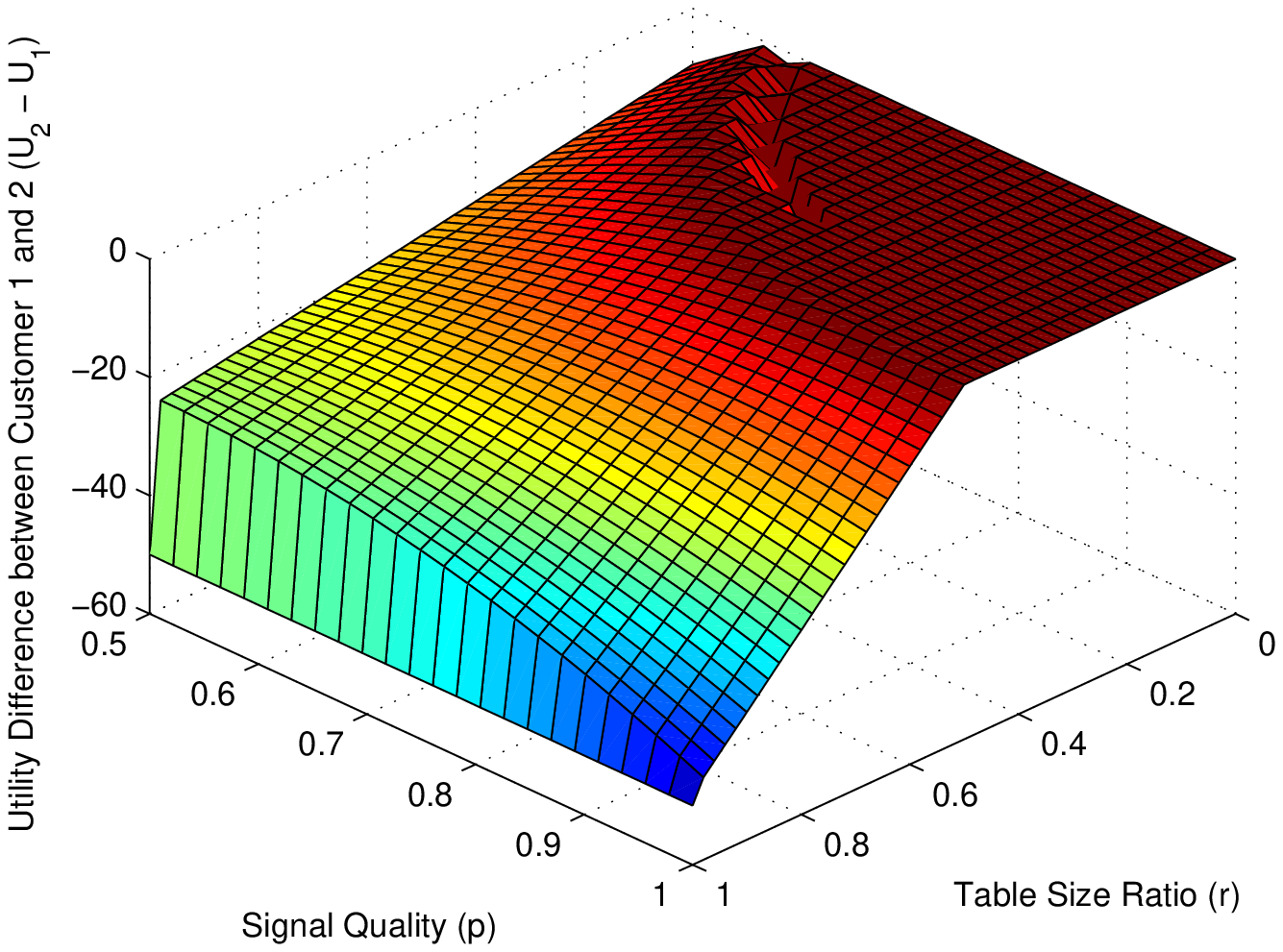}
        \label{fig_sim_adv_3_2}
      }
      \subfigure[\scriptsize{Utility Difference between $1$ and $3$}]{
        \includegraphics[width=5cm]{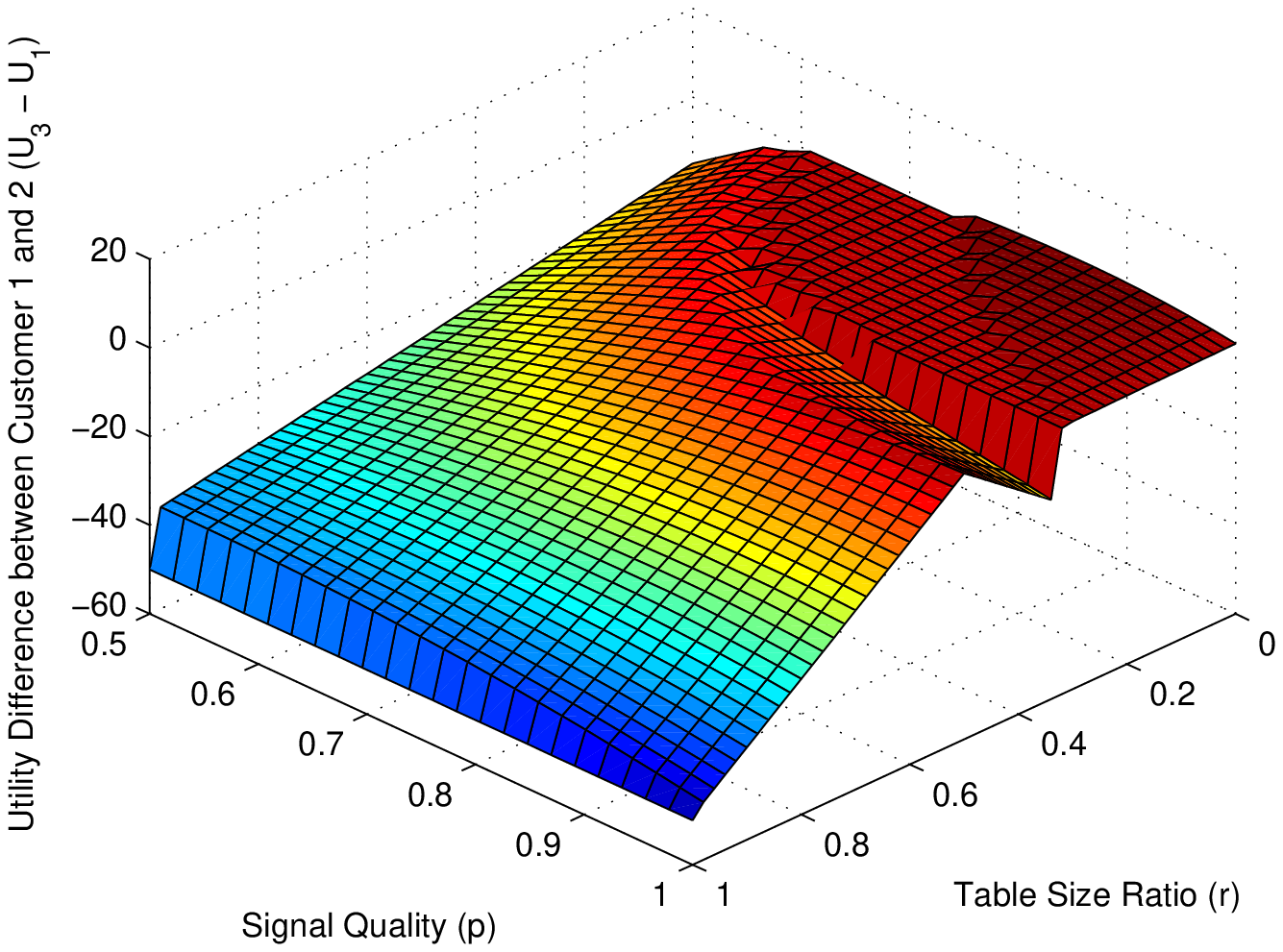}
        \label{fig_sim_adv_3_3}
      }
    \caption{The effect of different Table Size Ratio and Signal Quality in 3-Customer Restaurant}
    \label{fig_sim_adv_3}
    \end{centering}
\end{figure}

We first simulate a 3-customer scheme. From Fig. \ref{fig_sim_adv_3}, we can see that the advantage of customers making decisions at different order is significantly affected by both the signal quality and the table size ratio. As shown in Fig. \ref{fig_sim_adv_3_max}, when the signal quality is high and the table size ratio is low, customer $3$ has the largest expected utility. For other regions, customer $1$ has the largest expected utility. This phenomenon can be explained as follows. When the ratio is lower than $\frac{1}{3}$, all customers desire the larger table since even all of them select the larger one, each of them still have a utility larger than choosing the smaller one. In such a case, customers who choose late would have advantages since they have collected more signals and have a higher probability to identify the large table.

On one hand, when the signal quality is low, even the third customer cannot form a strong belief on the true state. In such a case, the expected size of each table becomes less significantly, and customers' decisions rely more on the negative network externality effect, i.e., how crowded of each table. When the first two customers choose the same table, customer $3$ is more likely to choose the other table to avoid the negative network externality. On the other hand, when the signal quality is high, customer $3$ is likely to form a strong belief on the true state and will choose the table according to the signals he collected. Therefore, when signal quality is high and the table size ratio is low, customer $3$ has the advantage of getting a higher utility in the Chinese restaurant game.

\begin{figure}

    \begin{centering}
      \subfigure[5 Customers]{
            \includegraphics[width=7cm]{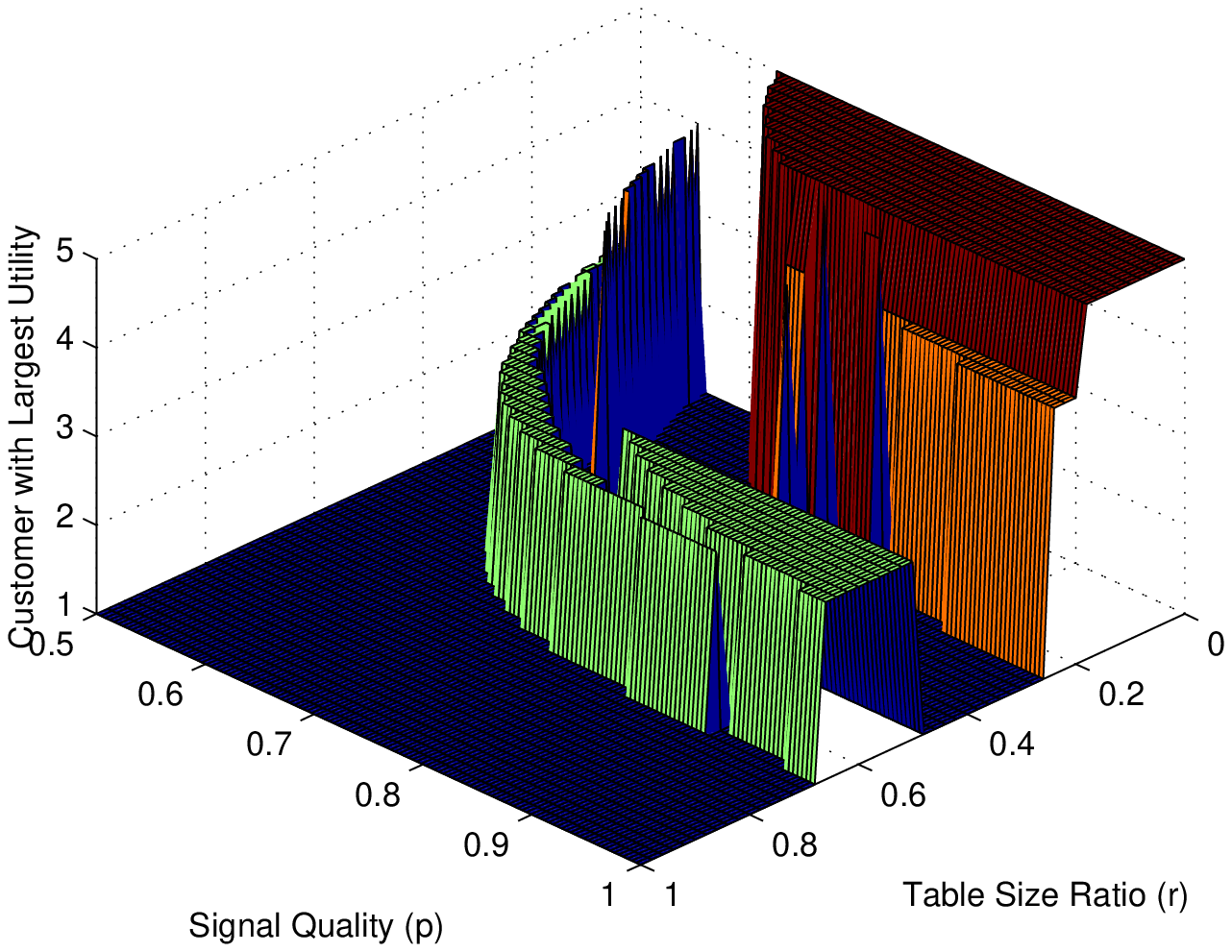}
        \label{fig_sim_adv_5}
      }
        \subfigure[10 Customers]{
            \includegraphics[width=7cm]{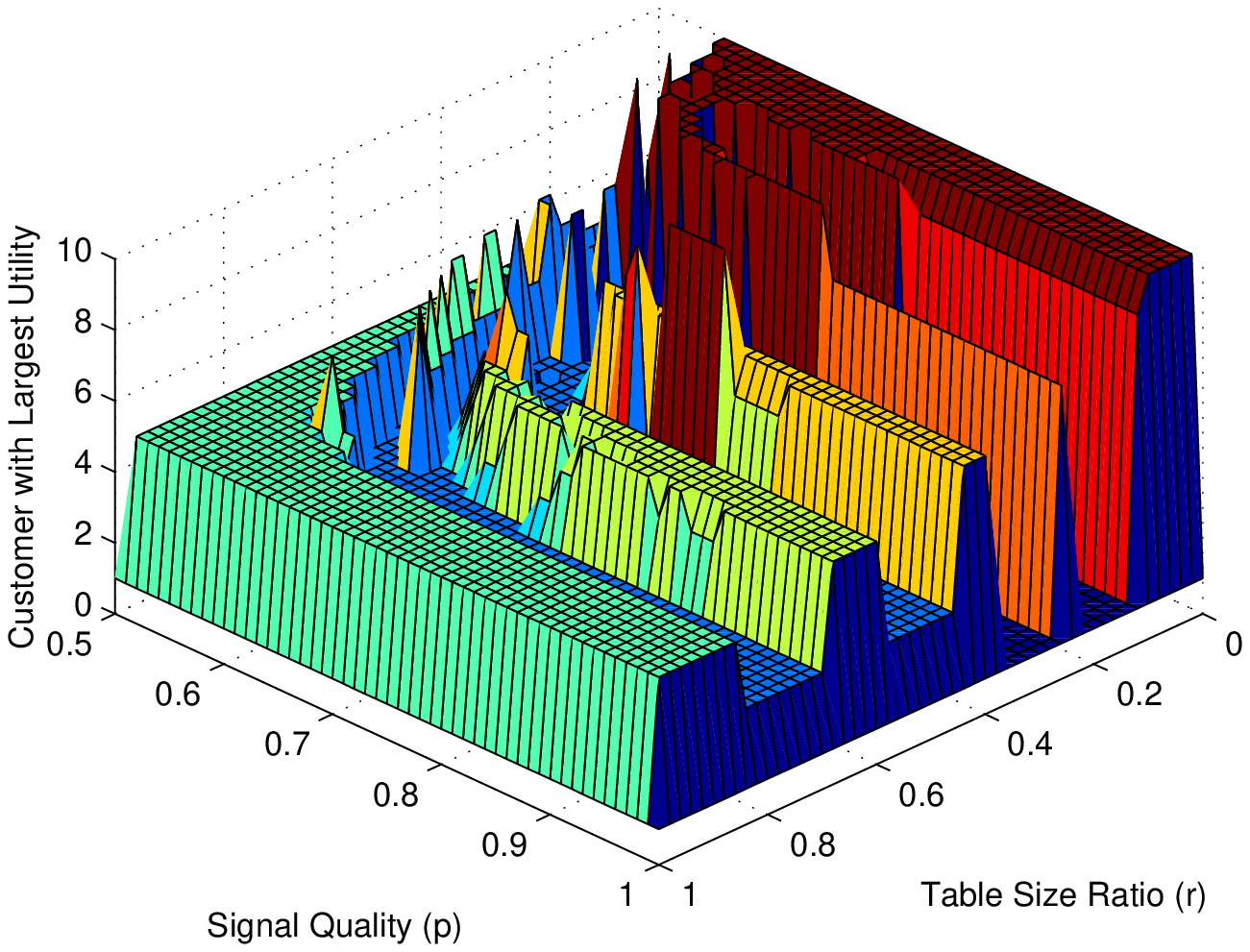}
        \label{fig_sim_adv_10}
      }
  \caption{The effect of different Table Size Ratio and Signal Quality in 5- and 10-Customer Restaurants}
    \label{fig_sim_adv_more}
    \end{centering}
\end{figure}

Nevertheless, due to the complicated game structure in Chinese restaurant game, the effect of signal quality and table size ratio is generally non-linear. As shown in Fig. \ref{fig_sim_adv_5}, when the number of customers increases to $5$, similar to the 3-customer scheme, customer $5$ has the largest utility when the signal quality is high and the table size ratio is low, while customer $1$ has the largest utility when the signal quality is low and the table size ratio is high. However, we observe that in some cases, customer $3$ becomes the one with largest utility. The reasons behind this phenomenon is as follows. In these cases, we observe that the expected number of customers in the larger table is $3$, and this table provides the customers a larger utility then the other one at the equilibrium. Therefore, customers would try to identify this table and choose it according to their own believes. Since customer $3$ collects more signals than customer $1$ and $2$, he is more likely to choose the correct table. Moreover, since he is the third customer to choose a table, this table is always available to him. Therefore, customer $3$ has the largest expected utility in these cases.

Note that the expected table size is determined by both the signal quality and the table size ratio. Generally, when the signal quality is low, a customer is less likely to construct a strong belief on the true state, i.e., the expected table sizes of both tables are similar. This suggests that a lower signal quality has a similar effect on the expected table size as a higher table size ratio. Our arguments are supported by the concentric-like structure shown in Fig. \ref{fig_sim_adv_5}. The same arguments can be applied to the 10-customer scheme, which is shown in Fig. \ref{fig_sim_adv_10}. We can observe the similar concentric-like structure. Additionally, we observe that when the table size ratio increases, the order of customer who has the largest utility in the peaks decreases from $10$ to $5$. This is consistent with our arguments since when the table size ratio increases, the equilibrium number of customers in the large table decreases from $10$ to $5$. This also explains why customer $1$ does not have the largest utility when the table size ratio is high. In this case, the equilibrium number of customers in the large table is $5$, and the large table provides higher utilities to customers in the equilibrium.
Since customer $5$ can collect more signals than previous customers, he has better knowledge on the table size than customer 1 to 4. Moreover, since customer $5$ is the fifth one to choose the table, he always has the opportunity to choose the large table. In such a case, customer 5 is the one with the largest expected utility when the table size ratio is high.

Next, we discuss two specific scenarios: the resource pool scenario with $r=0.4$ and available/unavailable scenarios with $r=0$. In resource pool scenario, the table size of the second table is $40$. Users act sequentially and rationally to choose these two tables to maximize their utilities. In available/unavailable scenario, the second table size is $0$, which means that a customer has positive utility only when he chooses the right table. For both scenarios, we examine the schemes with the number of customers $N=3$ and $N=5$.

\begin{figure}

    \begin{centering}
      \subfigure[5 Customers]{
            \includegraphics[width=5.5cm]{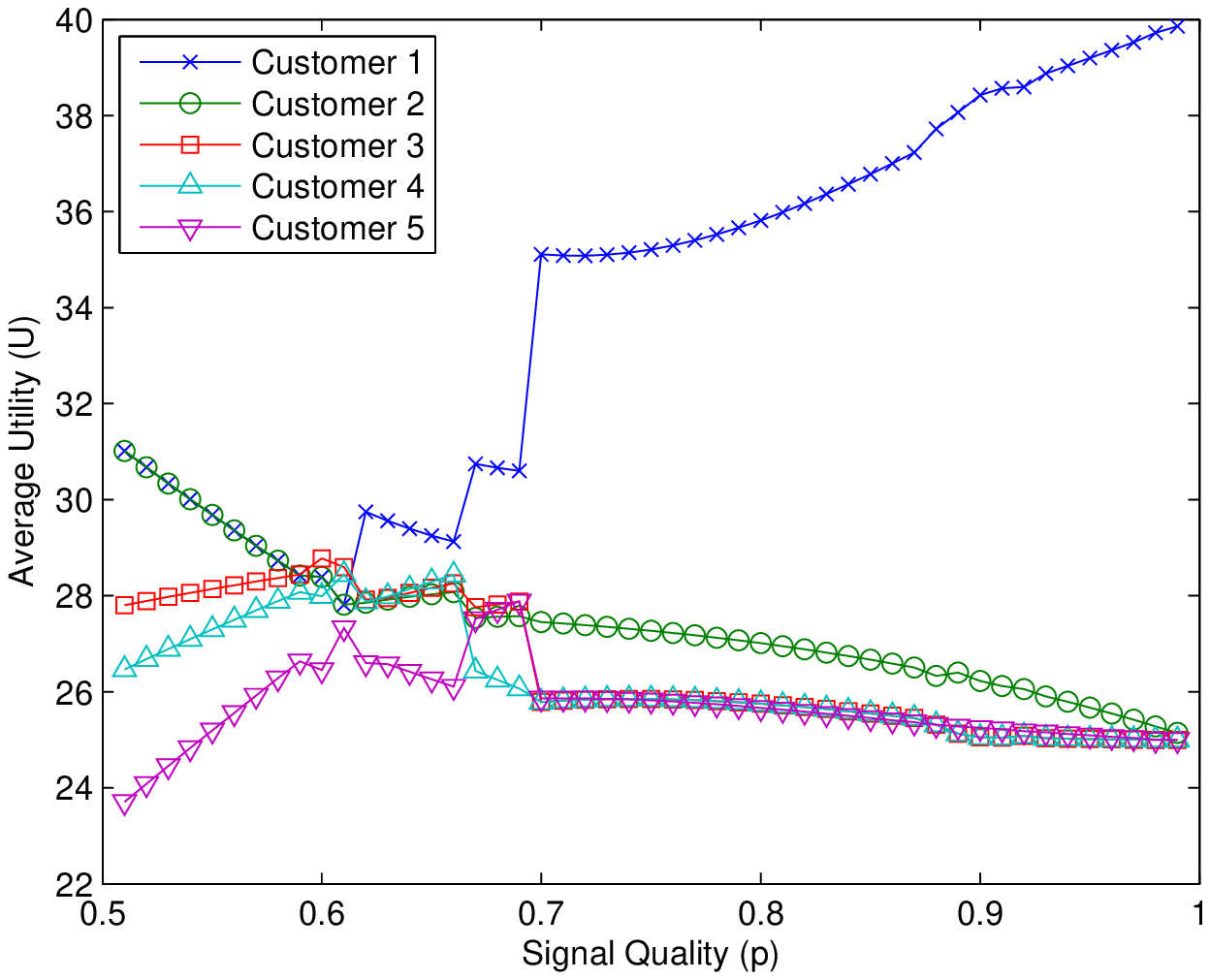}
        \label{fig_sim_pool_5}
      }
      \subfigure[3 Customers]{
        \includegraphics[width=5.5cm]{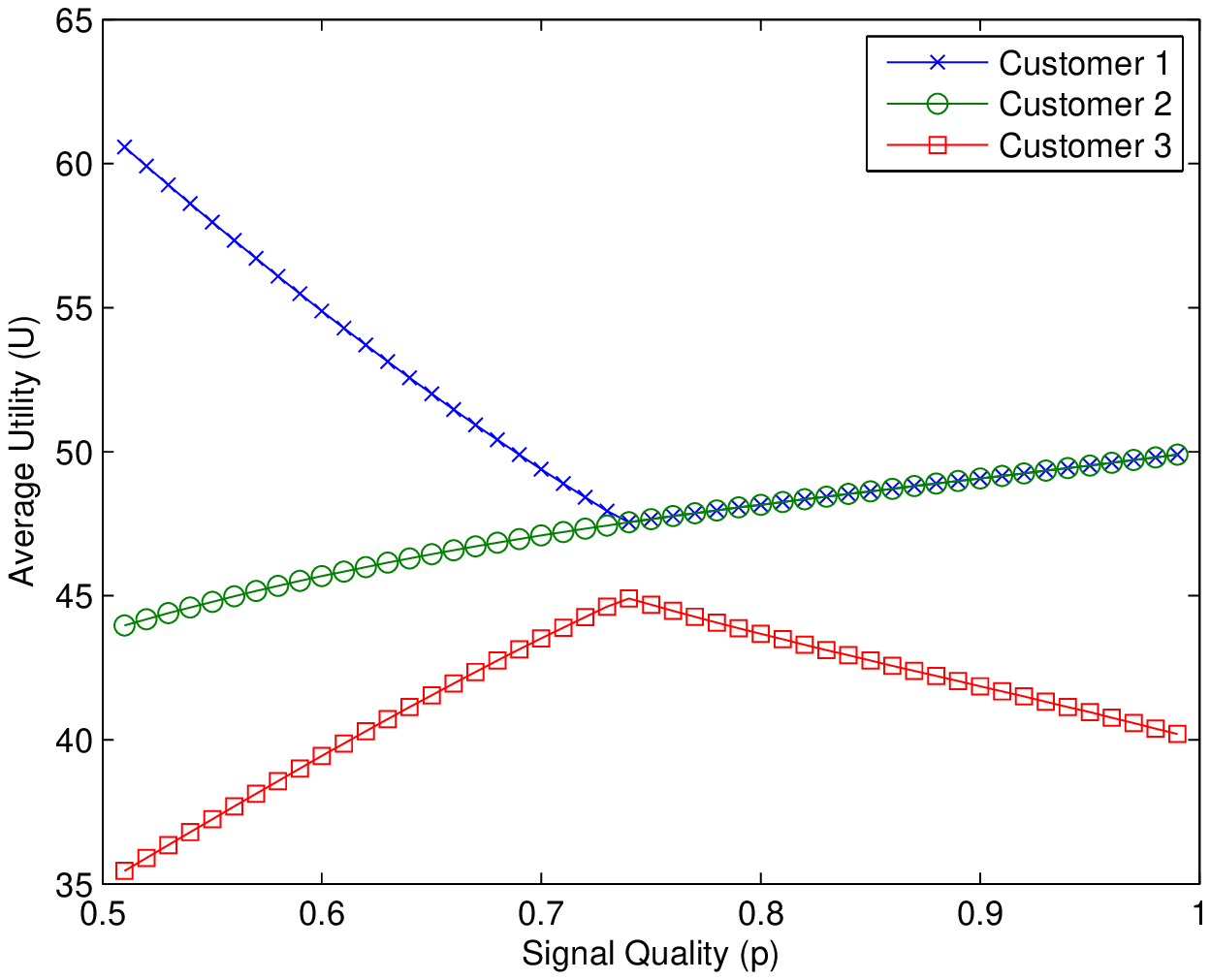}
        \label{fig_sim_pool_3}
      }
      \subfigure[Best Response when N=3]{
      \scriptsize
      \begin{tabular}[b]{|c|c|c|}
        \hline
        Signals & \multicolumn{2}{c|}{ Best Response } \\
        \hline
        $s_1,s_2,s_3$ & $p=0.9$ & $p=0.6$ \\
        \hline
         2,2,2 &  2,2,1 & 1,2,2 \\
     1,2,2 &  1,2,2 & 2,1,2 \\
     2,1,2 &  2,1,2 & 1,2,2 \\
     1,1,2 &  1,1,2 & 2,1,1 \\
     2,2,1 &  2,2,1 & 1,2,2 \\
     1,2,1 &  1,2,1 & 2,1,1 \\
     2,1,1 &  2,1,1 & 1,2,1 \\
     1,1,1 &  1,1,2 & 2,1,1 \\
     \hline
        \end{tabular}
        \label{tab_be_pool_3}
      }
    \caption{Average utility of Customers in Resource Pool scenario when $r=0.4$}
    \label{fig_sim_pool}
    \end{centering}
\end{figure}

From Fig. \ref{fig_sim_pool}, we can see that in the resource pool scenario with $r=0.4$, customer $1$ on average has significant higher utility, which is consistent with the result in Fig. \ref{fig_sim_adv_5}. Using 5-customer scheme shown in Fig. \ref{fig_sim_pool_5} as an example, the advantage of playing first becomes significant when signal quality is very low ($p < 0.6$), or the signal quality is high ($p > 0.7$).  We also find that customer $5$ has the lowest average utility for most signal quality $p$. We may have a clearer view on this in the 3-customer scheme. We list the best response of customers given the received signals in Fig. \ref{tab_be_pool_3}. We observe that when signal quality $p$ is large, both customer $1$ and $2$ follow the signals they received to choose the tables. However, customer $3$ does not follows his signal if the first two customers choose the same table. Instead, customer $3$ will choose the table that is still empty. In this case, although customer $3$ know which table is larger, he does not choose that table since it has been occupied by the first two customers. The network externality effect dominates the learning advantage in this case.

However, when $p$ is low, the best response of customer $1$ is opposite, i.e., he will choose the table that is indicated as the smaller one by the signal he received. At the first glance, the best response of customer $1$ seems to be unreasonable. However, such a strategy is indeed customer $1$'s best response considering the expected equilibrium in this case. According to Theorem \ref{thm_exist_ne_seq_p}, if perfect signals ($p=1$) are given, the large table should be chosen by customer $1$ and $2$ since the utility of large table is $100/2=50$ is larger than the that of the small table, which is $40/1=40$, in the equilibrium. However, when the imperfect signals are given, customers choose the tables based on the expected table sizes. When signal quality is low, the uncertainty on the table size is large, which leads to similar expected table sizes for both tables. In such a case, customer $1$ favors the smaller table because it can provide a higher expected utility, compared with sharing with another customer in the larger table.

\begin{figure}

    \begin{centering}
      \subfigure[5 Customers]{
            \includegraphics[width=5cm]{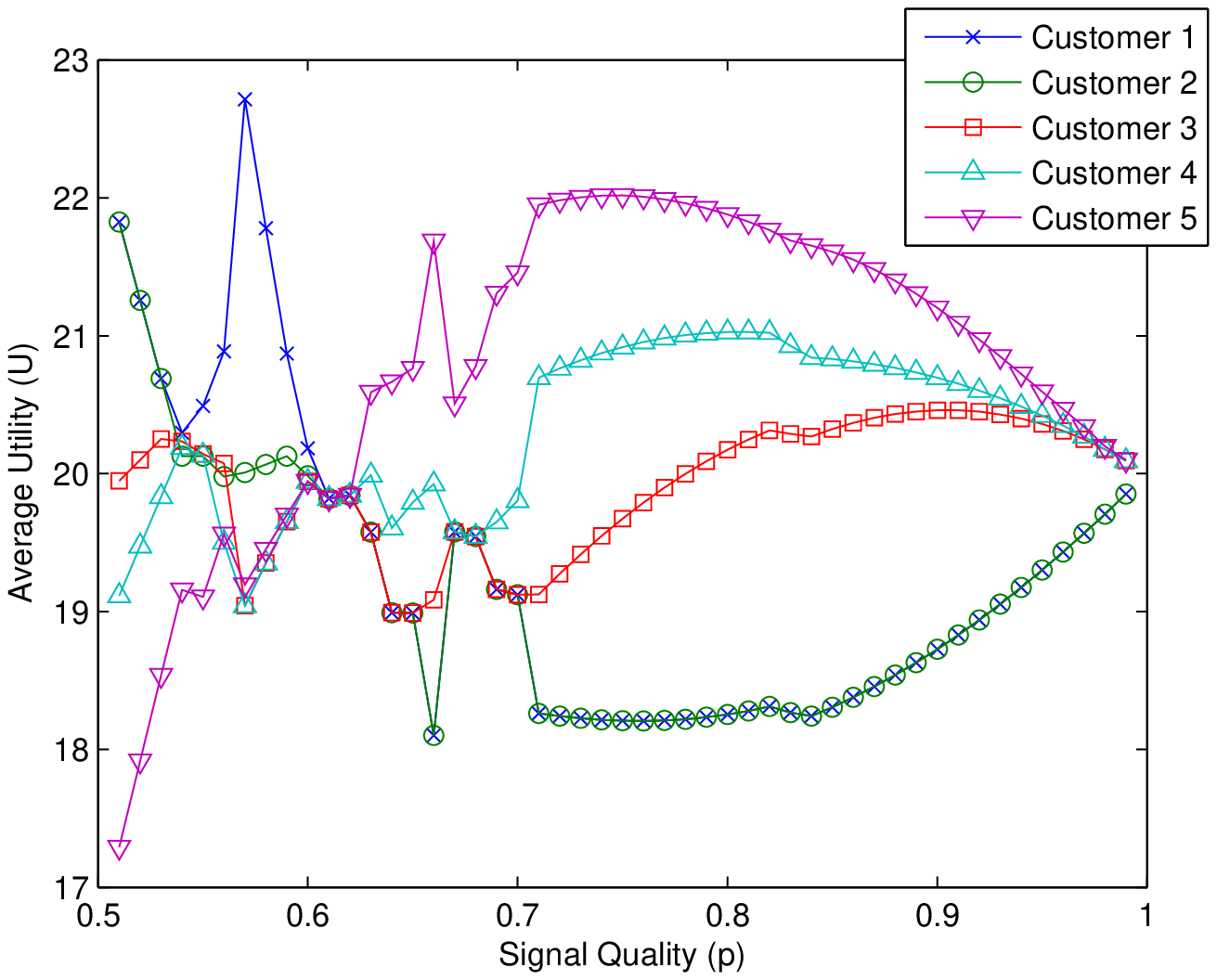}
        \label{fig_sim_avail_5}
      }
      \subfigure[3 Customers]{
        \includegraphics[width=5cm]{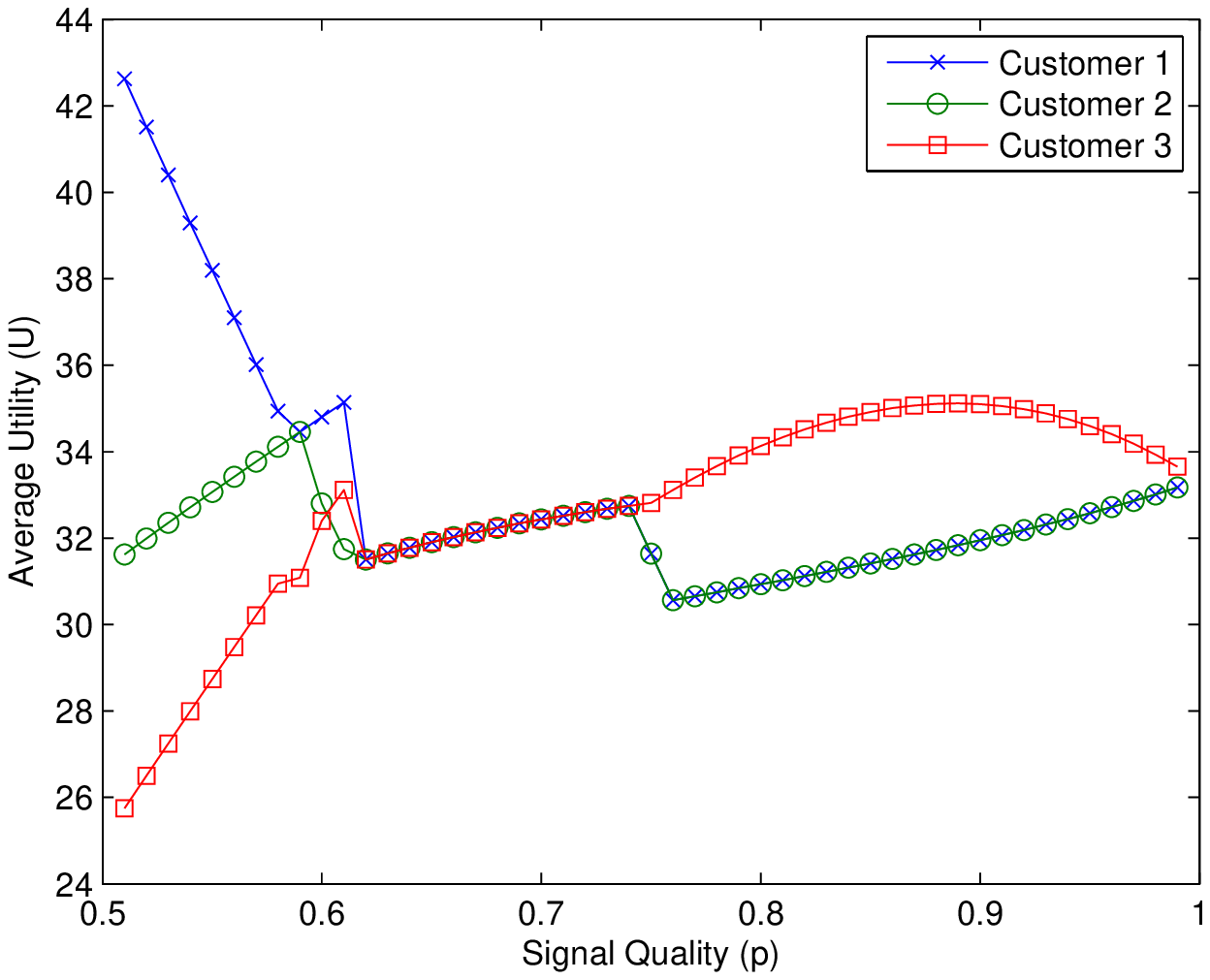}
        \label{fig_sim_avail_3}
      }
      \subfigure[Best Response when $N=3$]{
      \tiny
      \begin{tabular}[b]{|c|c|c|c|}
        \hline
        Signals & \multicolumn{3}{c|}{ Best Response }\\
        \hline
        $s_1,s_2,s_3$ & $p=0.9$ & $p=0.7$ & $p=0.55$ \\
        \hline
         2,2,2 & 2,2,2 & 2,2,2 & 1,2,2 \\
     1,2,2 & 1,2,2 & 1,2,2 & 2,1,2 \\
     2,1,2 & 2,1,2 & 2,1,2 & 1,2,2 \\
     1,1,2 & 1,1,1 & 1,1,2 & 2,1,1 \\
     2,2,1 & 2,2,2 & 2,2,1 & 1,2,2 \\
     1,2,1 & 1,2,1 & 1,2,1 & 2,1,1 \\
     2,1,1 & 2,1,1 & 2,1,1 & 1,2,1 \\
     1,1,1 & 1,1,1 & 1,1,1 & 2,1,1 \\
     \hline
        \end{tabular}
        \label{tab_be_avail_3}
        }
    \caption{Average utility of Customers in Available/Unavailable scenario when $r=0$}
    \label{fig_sim_avail}
    \end{centering}
\end{figure}

In the available/unavailable scenario, as shown in Fig. \ref{fig_sim_avail}, the advantage of customer $1$ in playing first becomes less significant. Using 5-customer scheme shown in Fig. \ref{fig_sim_avail_5} as an example, when signal quality $p$ is larger than $0.6$, customer $5$ has the largest average utility and customer $1$ has smallest average utility. Such a phenomenon is because customers should try their best on identifying the available table when $r=0$. Learning from previous signals gives the later customers a significant advantage in this case. Nevertheless, we observe that the best responses of later customers are not necessary always choosing the table that is more likely to be available. We use the 3-customer as an illustrative example. We list the best response of all customers given the received signals in Fig. \ref{tab_be_avail_3}. When the signal quality is pretty low ($p = 0.55$), we have the same best response as the one in resource pool scenario, where the network externality effect still plays a significant role. Using $(s_1,s_2,s_3)=(2,2,1)$ as an example, even customer $3$ finds that table $2$ is more likely to be available, his best response is still choosing table $1$ since table $2$ is already chosen by both customer $1$ and $2$, and the expected utility of choosing table $1$ with only himself is higher than that of choosing table $2$ with other two customers.
As the signal quality $p$ becomes high, e.g., $p = 0.9$, customer $3$ will choose the table according to all signals $s_1,s_2,s_3$ he collected. The belief constructed by the signals are now strong enough to overcome the loss in the network externality effect, which makes him choose the table that is more likely to be available.

\section{Conclusion}\label{sec_con}
In this paper, we proposed a new game, called Chinese Restaurant Game, by combining the strategic game-theoretic analysis and non-strategic machine learning technique. The proposed Chinese restaurant game can provide a new general framework for analyzing the strategic learning and predicting behaviors of rational agents in a social network with negative network externality.
By conducting the analysis on the proposed game, we derived the optimal strategy for each agent and provided a recursive method to achieve the equilibrium.
The tradeoff between two contradictory advantages, which are making decisions earlier for choosing better tables and making decisions later for learning more accurate believes, is discussed through simulations.
We found that both the signal quality of the unknown system state and the table size ratio affect the expected utilities of customers with different decision orders. Generally, when the signal quality is low and the table size ratio is high, the advantage of playing first dominates the benefit from learning. On the contrary, when the signal quality is high and the table size ratio is low, the advantage of playing later for better knowledge on the true state increases the expected utility of later agents.

\scriptsize

\bibliography{crg}

\begin{thebibliography}{10}

\bibitem{bala1998learning}
V.~Bala and S.~Goyal.
\newblock Learning from neighbours.
\newblock {\em The Review of Economic Studies}, 65(3):595, 1998.

\bibitem{golub2007naive}
B.~Golub and M.O. Jackson.
\newblock {Na{\i}ve learning in social networks: Convergence, influence, and
  the wisdom of crowds}.
\newblock {\em Working Paper}, 2007.

\bibitem{acemoglu2011bayesian}
D.~Acemoglu, M.A. Dahleh, I.~Lobel, and A.~Ozdaglar.
\newblock {Bayesian Learning in Social Networks}.
\newblock {\em LIDS report 2780, Review of Economic Studies}, January 2011.

\bibitem{acemoglu2010opinion}
D.~Acemoglu and A.~Ozdaglar.
\newblock Opinion dynamics and learning in social networks.
\newblock {\em LIDS report 2851, to appear in the inaugural issue of Dynamic
  Games and Applications,}, 2010.

\bibitem{cooper1999coordination}
R.W. Cooper.
\newblock {\em {Coordination games: Complementarities and macroeconomics}}.
\newblock Cambridge University Press, 1999.

\bibitem{katz1986technology}
M.L. Katz and C.~Shapiro.
\newblock Technology adoption in the presence of network externalities.
\newblock {\em The journal of political economy}, pages 822--841, 1986.

\bibitem{sandholm2005negative}
W.H. Sandholm.
\newblock Negative externalities and evolutionary implementation.
\newblock {\em Review of Economic Studies}, 72(3):885--915, 2005.

\bibitem{fagiolo2005endogenous}
G.~Fagiolo.
\newblock Endogenous neighborhood formation in a local coordination model with
  negative network externalities.
\newblock {\em Journal of Economic Dynamics and Control}, 29(1-2):297--319,
  2005.

\bibitem{wit1999social}
J.~Wit.
\newblock Social learning in a common interest voting game.
\newblock {\em Games and Economic Behavior}, 26(1):131--156, 1999.

\bibitem{battaglini2005sequential}
M.~Battaglini.
\newblock Sequential voting with abstention.
\newblock {\em Games and Economic Behavior}, 51(2):445--463, 2005.

\bibitem{ali2010observ}
S~Nageeb Ali and Navin Kartik.
\newblock {Observational Learning with Collective Preferences}.
\newblock {\em Manuscript, Columbia University}, 2010.

\bibitem{gale1995dynamic}
D.~Gale.
\newblock Dynamic coordination games.
\newblock {\em Economic Theory}, 5:1--18, 1995.
\newblock 10.1007/BF01213641.

\bibitem{dasgupta2000social}
A.~Dasgupta.
\newblock Social learning with payoff complementarities.
\newblock {\em Working Paper}, 2000.

\bibitem{dasgupta2007coordination}
A.~Dasgupta.
\newblock Coordination and delay in global games.
\newblock {\em Journal of Economic Theory}, 134(1):195--225, 2007.

\bibitem{choi2011network}
S~Choi, D.~Gale, S.~Kariv, and T.~Palfrey.
\newblock {Network architecture, salience and coordination}.
\newblock {\em Games and Economic Behavior}, 73(1):76 -- 90, 2011.

\bibitem{mitchell1997machine}
T.M. Mitchell.
\newblock {\em Machine learning}.
\newblock 1997.

\bibitem{aldous1985exchangeable}
D.~Aldous, I.~Ibragimov, J.~Jacod, and D.~Aldous.
\newblock Exchangeability and related topics.
\newblock In {\em Lecture Notes in Mathematics}, volume 1117, pages 1--198.
  Springer Berlin / Heidelberg, 1985.
\newblock 10.1007/BFb0099421.

\bibitem{pitman1995exchangeable}
J.~Pitman.
\newblock Exchangeable and partially exchangeable random partitions.
\newblock {\em Probability Theory and Related Fields}, 102(2):145--158, 1995.

\bibitem{wang2011crgpart2}
C.Y Wang, Y.~Chen, and K.~J.~R. Liu.
\newblock {Chinese Restaurant Game-Part II:Applications to Cognitive Radio,
  Cloud Computing, and Online Social Networking}.
\newblock {\em Arxiv preprint arXiv:1112.2187}, 2011.

\bibitem{carlsson1993global}
H.~Carlsson and E.~Van~Damme.
\newblock Global games and equilibrium selection.
\newblock {\em Econometrica: Journal of the Econometric Society}, pages
  989--1018, 1993.

\bibitem{morris2001global}
S.~Morris and H.~Shin.
\newblock Global games: theory and applications.
\newblock {\em Cowles Foundation Discussion Paper No. 1275R}, 2001.

\bibitem{angeletos2006crises}
G.M. Angeletos and I.~Werning.
\newblock Crises and prices: Information aggregation, multiplicity, and
  volatility.
\newblock {\em The American economic review}, pages 1720--1736, 2006.

\bibitem{krishnamurthy2011decentra}
V.~Krishnamurthy.
\newblock {Decentralized activation in sensor networks-global games and
  adaptive filtering games}.
\newblock {\em Digital Signal Processing}, 2011.

\bibitem{krishnamurthy2009decentra}
V.~Krishnamurthy.
\newblock {Decentralized Spectrum Access Amongst Cognitive Radios: An
  Interacting Multivariate Global Game-Theoretic Approach}.
\newblock {\em IEEE Transactions on Signal Processing}, 57(10):3999 --4013,
  Oct. 2009.

\bibitem{angeletos2006signaling}
G.M. Angeletos, C.~Hellwig, and A.~Pavan.
\newblock Signaling in a global game: Coordination and policy traps.
\newblock {\em Journal of Political Economy}, 114(3):452--484, 2006.

\bibitem{angeletos2007dynamic}
G.M. Angeletos, C.~Hellwig, and A.~Pavan.
\newblock Dynamic global games of regime change: Learning, multiplicity, and
  the timing of attacks.
\newblock {\em Econometrica}, 75(3):711--756, 2007.

\bibitem{costain2007herding}
J.S. Costain.
\newblock A herding perspective on global games and multiplicity.
\newblock {\em The BE Journal of Theoretical Economics}, 7(1):22, 2007.

\end{thebibliography}
\bibliographystyle{unsrt}

\end{document}